\documentclass[a4paper,11pt,noamsfonts]{amsart}
\pdfoutput=1
\usepackage[utf8]{inputenc}
\usepackage[english]{babel}
\usepackage[T1]{fontenc}

\usepackage{newtx}

\usepackage{microtype}
\usepackage{color}
\usepackage{pgfplots}
\pgfplotsset{compat=1.18}
\usepackage{tikz}
\usetikzlibrary{quantikz2}
\usetikzlibrary{cd,pgfplots.statistics}

\usepackage{standalone}
\usepackage[numbers]{natbib}
\usepackage[hidelinks]{hyperref}
\usepackage{bbold}
\usepackage{bm}
\usepackage{caption}
\usepackage{subcaption}
\usepackage[a4paper, total={150mm,215mm}, centering]{geometry}

\newtheorem{theorem}{Theorem}

\newtheorem{corollary}[theorem]{Corollary}
\newtheorem{lemma}[theorem]{Lemma}
\theoremstyle{definition}
\newtheorem{definition}[theorem]{Definition}
\theoremstyle{remark}
\newtheorem{example}[theorem]{Example}
\newtheorem{remark}[theorem]{Remark}

%
\definecolor{mycolor1}{rgb}{0.00000,0.44700,0.74100}
\definecolor{mycolor2}{rgb}{0.85000,0.32500,0.09800}
\definecolor{mycolor4}{rgb}{0.92900,0.69400,0.12500}

\DeclareFontFamily{U}{matha}{\hyphenchar\font45}
\DeclareFontShape{U}{matha}{m}{n}{
	<-6> matha5 <6-7> matha6 <7-8> matha7
	<8-9> matha8 <9-10> matha9
	<10-12> matha10 <12-> matha12
}{}
\DeclareSymbolFont{matha}{U}{matha}{m}{n}

\DeclareFontFamily{U}{mathx}{\hyphenchar\font45}
\DeclareFontShape{U}{mathx}{m}{n}{
	<-6> mathx5 <6-7> mathx6 <7-8> mathx7
	<8-9> mathx8 <9-10> mathx9
	<10-12> mathx10 <12-> mathx12
}{}
\DeclareSymbolFont{mathx}{U}{mathx}{m}{n}

\DeclareMathDelimiter{\vvvert} {0}{matha}{"7E}{mathx}{"17}

\newcommand{\N}{\mathbb{N}}

\newcommand{\C}{\mathbb{C}}

\newcommand{\ie}{\eta}
\newcommand{\Cie}{C_\ie}

\newcommand{\qc}{c}

\DeclareMathOperator{\polylog}{polylog}

\DeclareMathOperator{\Id}{Id}

\newcommand{\B}{\mathcal{B}}

\newcommand{\A}{\mathcal{A}}
\newcommand{\F}{\mathcal{F}}
\newcommand{\G}{\mathcal{G}}

\newcommand{\Diff}{D}
\newcommand{\tol}{\varepsilon}
\newcommand{\CNOT}[1]{\operatorname{C}_{#1}\!\operatorname{NOT}}

\newcommand{\Cg}{C}
 
\renewcommand{\ket}[1]{|#1\rangle}
\allowdisplaybreaks
%

\makeatletter
\def\paragraph{\@startsection{paragraph}{4}%
  {\z@}{2ex}{-1em}%
  {\normalfont\bfseries}}
\makeatother

\makeatletter
\renewcommand{\qwbundle}[2][]{%
  \pgfkeys{/quantikz/gates/.cd,style=,Strike Width=0.08cm,Strike Height=0.12cm,#1}%
  \pgfkeysgetvalue{/quantikz/gates/style}{\qz@style}%
  \pgfkeysgetvalue{/quantikz/gates/Strike Width}{\qz@sw}%
  \pgfkeysgetvalue{/quantikz/gates/Strike Height}{\qz@sh}%
  \expanded{%
    \noexpand\arrow[strike arrow={\qz@sw}{\qz@sh}{\unexpanded{#2}},\qz@style,phantom]{l}%
  }%
}
\makeatother

\begin{document}

\title{Nonlinear quantum computation by amplified encodings}
\author[Matthias Deiml, Daniel Peterseim]{Matthias Deiml$^1$, Daniel Peterseim$^{1,2}$}
\address{$^1$Institute of Mathematics, University of Augsburg, Universit\"atsstr.~12a, 86159 Augsburg, Germany}
\address{$^2$Centre for Advanced Analytics and Predictive Sciences (CAAPS), University of Augsburg, Universit\"atsstr.~12a, 86159 Augsburg, Germany}
\email{\{matthias.deiml,daniel.peterseim\}@uni-a.de}
\date{November 25, 2024}

\begin{abstract}
This paper presents a novel framework for high-dimensional nonlinear quantum computation that exploits tensor products of amplified vector and matrix encodings to efficiently evaluate multivariate polynomials. The approach enables the solution of nonlinear equations by quantum implementations of the fixed-point iteration and Newton's method, with quantitative runtime bounds derived in terms of the error tolerance. These results show that a quantum advantage, characterized by a logarithmic scaling of complexity with the dimension of the problem, is preserved. While Newton’s method attains near-optimal theoretical complexity, the fixed-point iteration may be better suited to near-term noisy hardware, as supported by our numerical experiments.
\end{abstract}

\maketitle

{
\scriptsize
\textbf{MSC Codes.} 68Q12, 65H10 81P68 65J15 65N22
}

\section{Introduction}
This paper introduces a novel framework to efficiently evaluate multivariate polynomials $f \colon \C^N \to \C^N$ in potentially very high dimensions $N \in \N$ on a quantum computer. This approach extends to any nonlinear function $f$ for which an efficient approximation as a multivariate polynomial is available. The framework is particularly suitable for solving nonlinear equations of the form
\begin{equation} \label{eq:nonlinear-eq}
f(x^*) = 0.
\end{equation}
Using Newton's method, a quantum solver with a favorable complexity can be constructed. 
\begin{theorem}[Main result: Quantum nonlinear solver for polynomials] \label{thm:main}
    There is an algorithm that takes as input block encodings of the coefficients of a multivariate polynomial $f$ as well as a circuit that prepares a state $\ket{x^{(0)}}$ and outputs a given quantity of interest of the solution $x^*$ of \eqref{eq:nonlinear-eq} with arbitrarily small error tolerance $\tol > 0$ and failure probability $\delta > 0$ in runtime
    \[
    \mathcal{O}(\tol^{-(1+\mu)}\log \delta^{-1})
    \]
    for any $\mu >0$, provided that the classical Newton's method converges quadratically for the initial guess $x^{(0)}$ in exact arithmetic.
\end{theorem}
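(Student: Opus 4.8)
The plan is to realise the classical Newton iteration $x^{(n+1)} = x^{(n)} - \Diff f(x^{(n)})^{-1} f(x^{(n)})$ entirely at the level of amplified encodings, to run it for $K = \mathcal{O}(\log\log\tol^{-1})$ steps, and to extract the quantity of interest from the final encoding by amplitude estimation. The first ingredient is polynomial evaluation: using the composition rules for amplified vector and matrix encodings — tensorisation, which multiplies amplification factors; linear combination with a controlled error; and amplitude amplification, which rescales the amplification factor at a cost inversely proportional to the current amplitude — I would show that, from an amplified encoding of $x$ together with the given block encodings of the coefficients of $f$, one can assemble an amplified encoding of $f(x)$: form the tensor powers $\ket{x}^{\otimes k}$ for $k \le \deg f$, contract each against the corresponding coefficient encoding, and take the linear combination over monomials. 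Applying the same construction to the derivative (a polynomial of degree one lower) yields a block encoding of the Jacobian $\Diff f(x)$; all amplification factors and errors are explicit in $\deg f$, the coefficient norms, and the input data.

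Next I would analyse a single quantum Newton step $x \mapsto x' = x - \Diff f(x)^{-1} f(x)$: build the amplified encoding of $f(x)$ and the block encoding of $\Diff f(x)$ as above; apply a quantum linear system solver to $\Diff f(x)$ with right-hand side the encoding of $f(x)$, producing an amplified encoding of $\Diff f(x)^{-1} f(x)$; and finally take the linear combination $x - \Diff f(x)^{-1}f(x)$. This is where the hypothesis enters: quadratic convergence of classical Newton at $x^{(0)}$ forces $\Diff f$ to be invertible at $x^*$ and, $f$ being polynomial hence smooth, uniformly well-conditioned on a neighbourhood $\mathcal{B}$ of $x^*$ containing all iterates, so the relevant condition number $\kappa$ is a constant independent of $\tol$. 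A single step multiplies the size of the preparation circuit by a factor $g = \mathcal{O}(\mathrm{poly}(\kappa,\deg f)\log \epsilon^{-1})$, where $\epsilon$ is the precision demanded of that step, and increases the amplification factor only additively by $\mathcal{O}(\kappa)$.

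Writing $e_n$ for the distance from $x^*$ of the vector actually encoded at step $n$, the Newton--Kantorovich estimate for the exact map together with the per-step implementation error $\epsilon_n$ gives a recursion $e_{n+1} \le L\,e_n^2 + C\,\epsilon_n$ with constants fixed on $\mathcal{B}$; choosing $\epsilon_n \sim e_n^2$ — admissible because the classical theory supplies an a priori bound on $e_n$, and one checks the perturbed iterates never leave $\mathcal{B}$ — gives $e_n \le a\,(b\,e_0)^{2^n}$, so $K = \mathcal{O}(\log\log\tol^{-1})$ steps bring the encoded vector within $\tol/2$ of $x^*$, with $\log\epsilon_n^{-1} = \mathcal{O}(2^n)$. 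Extracting the quantity of interest of the final encoding to additive error $\tol/2$ and failure probability $\delta$ by amplitude estimation costs $\mathcal{O}(\tol^{-1}\log\delta^{-1})$ uses of the preparation circuit, and the triangle inequality yields total error $\le \tol$. For the overall complexity, the circuit sizes obey $T_{n+1}\le g_n T_n$ with $g_n = \mathcal{O}(\mathrm{poly}(\kappa,\deg f)\,2^n)$, whence $T_K \le T_0\prod_{n<K} g_n = \mathrm{poly}(\kappa,\deg f)^{K}\,2^{\mathcal{O}(K^2)} = 2^{\mathcal{O}((\log\log\tol^{-1})^2)} = \tol^{-o(1)}$, so in particular $T_K = \mathcal{O}(\tol^{-\mu})$ for every $\mu>0$; the amplitude amplifications that keep the amplification factor bounded add only a further factor of the same polylogarithmic order. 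Multiplying by the readout cost gives the claimed runtime $\mathcal{O}(\tol^{-(1+\mu)}\log\delta^{-1})$.

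The main obstacle is exactly this coupled error--complexity bookkeeping: the amplification factors and block-encoding subnormalisations have to be propagated through tensorisation, linear combinations, matrix inversion and amplitude amplification across all $K$ iterations, and one must verify they accumulate to no more than a $\tol^{-o(1)}$ overhead — this is what the amplified-encoding formalism is built to guarantee, but making it quantitative is delicate, the more so because $f(x^{(n)})\to 0$ as $n$ grows, so its encoding becomes increasingly subnormalised and ever higher relative precision is required. A secondary, more routine difficulty is converting the qualitative hypothesis ``classical Newton converges quadratically at $x^{(0)}$'' into the explicit data the recursion needs — the basin $\mathcal{B}$, the condition bound $\kappa$, the Lipschitz constant $L$, the stability constant $C$, and a bound on the initial residual — and checking that the perturbed quantum iterates remain in the basin of convergence.
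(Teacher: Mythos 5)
Your overall architecture matches the paper's: amplified encodings of the polynomial and its Jacobian, a quantum linear solve inside each Newton step, $\mathcal{O}(\log\log\tol^{-1})$ steps with per-step tolerance tuned to maintain quadratic convergence, and amplitude estimation to read out the quantity of interest at a final cost of $\mathcal{O}(\tol^{-1}\log\delta^{-1})$. You correctly identify the central danger — ``$f(x^{(n)})\to0$ as $n$ grows, so its encoding becomes increasingly subnormalised'' — but you neither resolve it nor account for it in your complexity bookkeeping, and this is where the proposal breaks down.

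Concretely, you implement the textbook step $x\mapsto x-\Diff f(x)^{-1}f(x)$ and therefore must amplify an encoding of $f(x^{(n)})$. By the paper's Theorem~\ref{thm:block-encoding-poly}, that amplification costs a multiplicative factor proportional to $1/|f(x^{(n)})|\sim 1/e_n\sim\theta^{-2^n}$, not the $\log\epsilon_n^{-1}=\mathcal{O}(2^n)$ you quote for $g_n$. Aggregated over the $K\approx n_\tol$ steps this is $\prod_{n<K}\theta^{-2^n}=\theta^{-(2^K-1)}\approx\tol^{-1}$, so the final circuit size is $\tol^{-1+o(1)}$ rather than $\tol^{-o(1)}$, and with the readout cost the total becomes $\tol^{-2}\log\delta^{-1}$ — outside the claimed bound. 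The paper sidesteps this entirely by rewriting the iteration in the algebraically equivalent form $x^{(n+1)}=\Diff f(x^{(n)})^{-1}\bigl(\Diff f(x^{(n)})x^{(n)}-f(x^{(n)})\bigr)$: the vector $\Diff f(x)x-f(x)$ is again a polynomial with essentially the same coefficients as $f$, but it converges to $\Diff f(x^*)x^*\neq0$ (assuming $x^*\neq0$ and $\Diff f(x^*)$ invertible), so its amplified encoding has uniformly bounded complexity and the per-step overhead is only $\mathcal{O}(\kappa\log\tol_n^{-1})$. That reformulation is the missing idea; without it, the ``delicate bookkeeping'' you flag is not merely delicate but fails by a polynomial factor in $\tol^{-1}$.
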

The theorem is constructive, and its proof is given in Section~\ref{ss:newton}.
The constant in the $\mathcal{O}$-notation for the runtime bound depends on the runtime of the input encodings and can be as small as $\polylog N$, for example, when the coefficients are sparse.
This provides a significant speedup over implementations of Newton's method on classical computers when the problem size $N$ grows faster than the reciprocal error tolerance $\tol^{-1}$. This scenario commonly arises when solving nonlinear partial differential equations (PDEs), where the number of degrees of freedom of a spatial discretization to push the error below a given tolerance typically suffers from the curse of dimensionality.

Nonlinear computations on quantum computers are difficult, mainly because all operations on quantum computers are unitary, and therefore inherently linear, transformations in the state space. We address this constraint by considering product states that encode multiple copies of the same variable, allowing us to introduce a form of nonlinearity.

Although our approach as a whole is novel, it integrates established building blocks and relates to ideas explored in previous works. The review \cite{TLLM24} provides an overview of methods for solving nonlinear PDEs on quantum computers, categorizing them into three main strategies to circumvent the restriction to unitary operations: methods using mathematical linearization, see also \cite{JL24} and the review \cite{JLY23a}, hybrid classical-quantum algorithms, such as \cite{LJM+20}, which employ Variational Quantum Computing to approximate nonlinear states, and methods that leverage the interaction between multiple copies of the solution state, like \cite{LDG+20}. Our algorithm aligns most closely with the latter category.
Algorithms outside the context of PDEs have also explored the approach of creating such tensor states to introduce nonlinearity. Examples include the general framework in~\cite{HCSS23}, a solver for nonlinear ordinary differential equations based on the fixed-point iteration~\cite{LO08}, and methods for optimizing multivariate polynomials \cite{RSW+19, GLW+21, GLWL21}. Most of these works conclude that creating tensor states leads to an exponential runtime, or rather a decay of information, rendering the approach too slow for practical use. 
Related optimization approaches~\cite{LWG+21, CGWL24} bypass the exponential cost by measuring certain quantities related to a decomposition of the polynomial. However, the runtime of these methods then scales linearly with the size of this decomposition, which, in the worst case, does not maintain an advantage over classical algorithms. Still, the tensor state approach stands out because, unlike the others, it can be applied to a broad range of problem classes and offers provably correct methods.
Finally, we want to mention a fourth group of methods~\cite{GMF21,RR23}, which are conceptually similar to the tensor state approach. However, they introduce nonlinearity by multiplying by a diagonal matrix containing the entries of the input vector. This enables Quantum Singular Value Transformation (QSVT)~\cite{GSLW19} to be used to apply polynomials element-wise to the input. As such, the technique is applicable in the special case of decoupled univariate polynomials, for which it reduces memory requirements and may be faster; see Example~\ref{ex:elementwise-poly} for details. The work~\cite{NW24} uses a construction similar to the QSVT-based one, but using rank-$1$ matrices instead of diagonal ones.

The distinguishing feature of our approach is our use of amplitude amplification. Similar to the Variable Time Amplitude Amplification (VTAA) \cite{Amb10,CGJ19,LS24}, this results in a more efficient algorithm when performing multiple steps of a Newton or fixed-point iteration. The runtime then grows only exponentially with the number of steps, as is the case with, for example,~\cite{RSW+19,LWG+21}, albeit at the cost of spherical constraints. Crucially, we relate this exponential complexity to the number of steps required to reach a prescribed tolerance, proving our approach is feasible. Due to the second-order (super-exponential) convergence of Newton's method, this results in an almost optimal overall runtime bound, an idea only hinted at in~\cite{RSW+19}. The  amplification scheme and resulting analysis may extend to any of the named tensor state or QSVT-based methods.
Additionally, we improve upon existing methods by introducing a simpler approach for computing the Jacobian of the target function, as well as its inverse. This simplification enables the development of an algorithm that has a working quantum implementation, which we have successfully tested on quantum simulators. We also demonstrate the feasibility of running the circuits on currently available commercial quantum hardware. Notably, the building blocks of our algorithm are designed to be straightforward and modular, ensuring accessibility and adaptability across a wide range of applications.

\paragraph{Notation}
To characterize the complexity of algorithms, we use the ``Big-$\mathcal{O}$-notation'', where the hidden constant is truly independent of the objects involved. Note that the hidden constant in this notation may also depend on the set of gates of the hardware used. We measure quantum algorithms in terms of the atomic gates that are being executed. This is related to the actual runtime by a multiplicative constant. The runtime of additional classical computation is negligible.
Whenever we write $\log$ we refer to the dyadic logarithm (base $2$), although in most places the particular choice of basis is not important. The operator $\polylog x$ in the $\mathcal{O}$-notation signifies $(\log x)^s$ for some constant $s \in \N$.

\section{Quantum building blocks for nonlinear computation} \label{sec:framework}
In a quantum computer, numerical information is best stored in the amplitudes of quantum states. For example, a vector $v \in \C^{2^n}$, $n \in \N$, normalized in the Euclidean norm $|\bullet|$ is stored using $n$ qubits encoded as
\[
\ket{v} \coloneqq \sum_{j = 0}^{2^n-1} v_j \ket{j}
\]
where $\ket{j}$ is the state of the qubit register that stores the integer $j$ in binary representation. Note that through this \textit{amplitude encoding} we store a normalized vector of dimension $2^n$ in only $n$ qubits, an exponential reduction in memory requirements compared to a classical computer. This encoding offers the potential for significant runtime improvements, but it also introduces challenges in preparing and limitations in manipulating these encoded quantum states. 

Since the effect of a quantum algorithm on an initial input state $\ket{\phi} \in \C^{2^n}$ can always be described by a unitary matrix $U \in \C^{2^n \times 2^n}$, the evolution of the quantum state is governed by linear transformations such that the resulting state is given by $U\ket{\phi}$. This inherent linearity is a fundamental property of quantum mechanics, ensuring that quantum operations preserve the overall probability of quantum states. However, it also means that directly implementing nonlinear transformations requires workarounds or approximations. A naive way to allow such nonlinear operations is to store numerical information in a binary format, as in a classical computer, rather than in the amplitudes of quantum states, but this approach would negate any advantages in terms of runtime and memory efficiency that arise from using the high-dimensional state space of a quantum computer. Thus, achieving nonlinearity while retaining the benefits of quantum computing requires more sophisticated strategies.

The challenge also arises when implementing linear but nonunitary operations. The concept of block encodings provides a solution by embedding these operations in a higher-dimensional unitary framework. This approach not only allows for the handling of linear transformations beyond unitary matrices, but also lays the foundation for nonlinear computations.

\subsection{Block encoding of nonlinear operations}
A fundamental nonlinear operation is the element-wise square of a quantum state represented by a normalized vector $v \in \C^{2^n}$, which we will denote by $v^2$ for simplicity. This state cannot be realized from an initial state of $\ket{v}$ through a quantum gate that represents a unitary transformation, but it can still be prepared on a quantum computer using the circuit in Figure~\ref{fig:circuit-square}. 
\begin{figure}
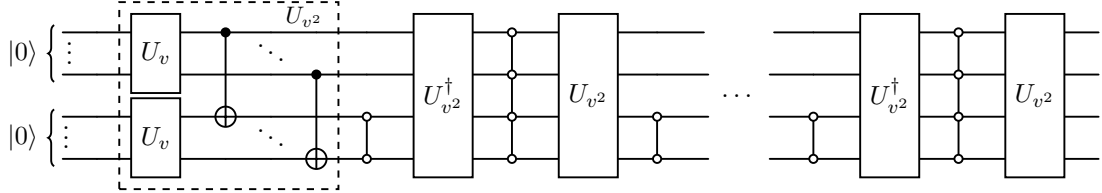

    \centering
    \includestandalone{circuit_square}
    \caption{Circuit to prepare and amplify a state $\ket{v^2}$ representing the element-wise square of a vector $v$, where $U_v$ is a circuit preparing the state $\ket{v}$.}
    \label{fig:circuit-square}
\end{figure}
Specifically, given \textit{two} copies of $\ket{v}$ in separate qubit registers, the desired quantities $v_j^2$ are implicitly encoded in the total state given by their tensor product 
\[\ket{v}\ket{v} = \sum_{j = 0}^{2^n-1} \sum_{k = 0}^{2^n-1} v_jv_k\ket{j}\ket{k} = \sum_{j = 0}^{2^n-1} v_j^2 \ket{j}\ket{j} + \sum_{j = 0}^{2^n-1} \sum_{\substack{k = 0 \\ k \neq j}}^{2^n-1} v_jv_k\ket{j}\ket{k}.\] 
The terms $v_j^2$ appear in the components where both registers are in the same state $\ket{j}$. Using quantum operations that extract only the parts of this combined state where the two registers are equal, we can effectively compute the element-wise square of $v$ on a quantum computer. This extraction can be realized by first computing the element-wise \emph{exclusive or} ($\oplus$) on the two registers,
in the quantum context also called $\operatorname{CNOT}$
\[
\operatorname{CNOT}^{\otimes n}\ket{v}\ket{v} = \sum_{j = 0}^{2^n-1} \sum_{k = 0}^{2^n-1} v_jv_k\ket{j}\ket{j \oplus k} = \ket{v^2}\ket{0} + \ket{\mathrm{bad}}.
\]
This corresponds to the subcircuit labeled $U_{v^2}$ in Figure~\ref{fig:circuit-square}, which seems to have been discovered independently multiple times, e.g.~in~\cite{LJM+20, RNF+23, GYC+24, Zyl24}. The second summand $\ket{\mathrm{bad}}$ is some state orthogonal to the subspace $\C^{2^n} \otimes \ket{0}$ and is not of any further interest. By applying a \textit{projective measurement} on the second register, the target information encoded in the first register can be extracted. Specifically, upon measuring the second register, the state collapses into a new state. With probability $|v^2|^2$, the result $0$ is measured and the new state is
\[\tfrac{1}{|v^2|}\ket{v^2}\ket{0}.\]
We only consider these runs measuring $0$ in the second register. By estimating the probability of this outcome, we can also recover the missing information about the norm $|v^2|$. It is important to note that this procedure requires two separate instances of $\ket{v}$. Due to the \textit{no cloning theorem}, these instances cannot be obtained by copying a single $\ket{v}$, and since imperfect cloning \cite{BH96} cannot be trivially applied in this context, the process that prepares the state $\ket{v}$ must be repeated independently to generate each instance.

In practice, the projective measurements used in this procedure are often not only difficult to implement, since they may require measurements in the middle of the circuit, but also somewhat inefficient, since the average time to extract the desired part of a state scales quadratically with the amplitude of that state. We can avoid the need for projective measurements by rephrasing the above approach in the language of block encodings \cite{CGJ19, CKS17, GSLW19}. Block encodings allow us to represent nonunitary linear operations by normalizing them and embedding them in higher-dimensional unitary transformations. The embedding of an $N$-dimensional space in a larger space of dimension $2^m\geq N$ is facilitated through the so-called $\CNOT{\Pi}$ gates. 
\begin{definition}[Projection as $\CNOT{\Pi}$ gate] \label{d:Cnotproj}
Let $N, m \in \N$ with $N \le 2^m$ and $\Pi \in \C^{N \times 2^m}$ be a matrix with orthonormal rows, representing a linear subspace projection. We call a gate of $m + 1$ qubits a $\CNOT{\Pi}$ gate if it flips the last bit if and only if the first $m$ bits represent a vector in the range of $\Pi^\dag$, i.e.~it performs the unitary operation 
\[ \Pi^\dagger\Pi \otimes \mathrm{X} + (\Id_N - \Pi^\dagger\Pi) \otimes \Id_2.\]
Here, $\Pi^\dagger$ denotes the conjugate transpose of $\Pi$, as is common in quantum mechanics, and $\mathrm{X}$ denotes the Pauli matrix $\mathrm{X} = \ket{0}\!\bra{1} + \ket{1}\!\bra{0}$.
\end{definition}
We will now denote by $\Pi$ not only the matrix, but also the implementation as a $\CNOT{\Pi}$ gate.
The following definition of block encoding is taken from \cite{DP25} and inspired by \cite{GSLW19}.
\begin{definition}[Block encoding of nonunitary operations \cite{DP25}] \label{d:block}
Let $\tol \ge 0$ and $m,N_1,N_2\in \N$ such that $M=2^m\geq \max\{N_1,N_2\}$, and consider a matrix $S \in \C^{N_2 \times N_1}$. We call the tuple $(U_S, \Pi_1, \Pi_2, \gamma)$ consisting of
\begin{itemize}
  \setlength\itemsep{0.2em}
    \item a quantum algorithm implementing a unitary matrix $U_S \in \C^{M \times M}$,
    \item subspace projections $\Pi_j \in \C^{N_j \times M}$, $j=1,2$ in the sense of Definition \ref{d:Cnotproj}, and
    \item the \emph{normalization factor} $\gamma \ge |S|$
\end{itemize}
a \emph{block encoding of \(S\) up to relative tolerance $\tol$} if
\[
|\gamma \Pi_2 U_S \Pi_1^\dag - S| \le \tol|S|.
\]
For convenience, we sometimes write $U_S$ to refer to the complete tuple and $\gamma(U_S)$ to refer to the normalization factor. If the tolerance $\tol$ is omitted, we assume that the block encoding is exact ($\tol = 0$).
Furthermore, the \textit{information efficiency} $\ie(U_S)$ of a block encoding is defined as
\begin{equation} \label{eq:information-efficiency}
    \ie(U_S) \coloneqq |\Pi_2 U_S\Pi_1^\dag| = |S|/\gamma \in [0, 1],
\end{equation}
where the second equality only holds if the block encoding is exact.
\end{definition}
Information efficiency, previously introduced in reciprocal form as ``subnormalization'' in \cite{DP25}, correlates with the actual cost of using a block encoding. Specifically, applying a block encoding to a state can dilute the state's information into parts of the higher-dimensional space \(\mathbb{C}^M\) that are outside the image of \(\Pi_2^\dag\), making them irrelevant to the computation. The probability of obtaining a relevant output from the produced state by naively sampling decreases by a factor of \(\ie(U)^2\) after applying the block encoding. As a result, the overall runtime increases by a factor of \(\ie(U)\), reflecting the additional effort required to obtain meaningful results from the encoded operation.

Block encodings can not only be used to realize nonunitary linear operations in quantum computers but also form the basis for nonlinear operations, as the following example shows.
\begin{example}[Block encoding of element-wise multiplication] \label{ex:mul}
The element-wise multiplication of two vectors can be understood as a linear operator $\odot \colon \C^{2^n} \otimes \C^{2^n} \to \C^{2^n}$ and the above observation shows that a block encoding of $\odot$ can be constructed with \[
U_\odot = \operatorname{CNOT}^{\otimes n},\qquad
\Pi_1 = \operatorname{Id}_{2^{2n}},\qquad
\Pi_2 = \operatorname{Id}_{2^n} \otimes (\proj{0})^{\otimes n},
\qquad\text{and}\quad\gamma = 1.
\]
The $\CNOT{\Pi}$ gates for the projections $\Pi_1$ and $\Pi_2$ can be implemented as $\Id_{2^{2n}} \otimes \mathrm{X}$ and $\Id_{2^n} \otimes \operatorname{CNOT}_n$ respectively, where $\operatorname{CNOT}_n$ denotes the $n$-qubit multi-controlled NOT-gate with control state $\ket{0}^{\otimes n}$.
\end{example}

The notion of block encodings is trivially extended also to vectors, by considering them as column matrices. For such encodings, we always assume that the projection $\Pi_1 \in \C^{1 \times M}$ is given by $\bra{0}$ and typically write $\Pi = \Pi_2$.
We also recall from \cite[Proposition~3.3]{DP25} that block encodings allow for a variety of operations, including addition, matrix-matrix and matrix-vector multiplication, and tensor products.
\begin{example}[Block encoding of element-wise powers of a vector] \label{ex:vector-power}
Let $U_v$ be a block encoding for some normalized vector $v \in \C^{2^n}$. Then we can first use the tensor product operations to obtain an encoding of $v \otimes v$. Applying the matrix-vector product operation to the result of this and $U_\odot$ leads to a block encoding of $v^2$, which corresponds to the subcircuit labeled $U_{v^2}$ in Figure~\ref{fig:circuit-square}.
By repeating the procedure, any element-wise power $v^{s}$, $s \in \N$, of a vector $v$ can be constructed. However, the no-cloning principle requires that the state $v$ be prepared at least $s$ times with the subroutine $U_v$ when using this technique. This is in contrast to classical computers where only $\log_2(s)$ calculations of the intermediate results $v^{2^j}$, $j=1,\ldots,\log_2(s)$ suffice.
\end{example}

Note that in the above case, the composition of the two block encodings $U_{v \otimes v}$ and $U_\odot$ works only if the projection $\Pi_2$ of $U_v$ is also given by $\Pi_2 = \Id_{2^{2n}}$, so that the embedding of $\C^{2^{2n}}$ is the same for both encodings. Restrictions to projections like this one are, in many cases, more a technical detail than a theoretical restriction, and can be resolved by finding a circuit that implements a unitary $V$ such that the equality $\Pi_2 V = \Id_{2^{2n}}$ holds. The unitary can then be appended to $U_v$, turning it into a block encoding with the desired projection. We make this slightly more rigorous with the following definition:

\begin{definition}[Equivalence of projections]
    Let $N, m_1, m_2 \in \N$. We say that two projections $\Pi_1 \in \C^{N \times 2^{m_1}}$ and $\Pi_2 \in \C^{N \times 2^{m_2}}$ are \emph{equivalent}, if we can find a circuit that implements a unitary $V \in \C^{m_1 \times m_1}$ such that for $m \coloneqq \min\{m_1, m_2\}$ we have \[(\Pi_1 V)|_{\C^{2^m}} = \Pi_2|_{\C^{2^m}}.\]
\end{definition}

One can check that this indeed defines an equivalence relation.
Clearly, two projections are equivalent if they are equal. Furthermore, they are equivalent if $\Pi_j = \Pi \otimes \ket{\phi_j}$, $j = 1,2$ for some known constant states $\ket{\phi_j}$ that can be prepared efficiently and with fixed projection $\Pi$.
In the following, we will assume that all projections with the same image are equivalent without further mention.

We conclude that given a block encoding of a vector, it is indeed possible to derive block encodings for nonlinear functions applied to this vector. However, this means that the original block encoding of the vector must be used several times, possibly very often. As we shall see, the advantageous dependence of quantum algorithms on the dimension of the space $\C^{2^n}$ can still give a significant improvement over classical computations. An important hurdle to achieve this is to prevent the loss of information efficiency associated with block encodings and their naive composition.

\subsection{Amplification of block encodings} \label{ssec:amplification}
In Example~\ref{ex:vector-power} discussing the element-wise $s$-th power of a vector $v$, the tensor product leads to an encoding $U_{v^{s}}$ with information efficiency
\[
\ie(U_{v^{s}}) = \ie(U_v)^{s}.
\]
Unless the block encoding of $v$ had a perfect information efficiency $\ie(U_v)=1$, this implies that the information efficiency of the encoding of the $s$-th power degrades exponentially with power $s$. This exponential loss of information efficiency can be mitigated by a process called \textit{amplitude amplification}. The following lemma is a classical result, closely related to the Grover search~\cite{Gro98}, and has been restated multiple times, such as in \cite{BH97,Amb10}, and in \cite[Theorem~28]{GSLW19} in the context of block encodings.
\begin{lemma}[Amplitude amplification] \label{lem:amplitude-amplification}
    Let $(U, \Pi, \bra{0}, \gamma)$ be a block encoding of a vector $v$ with information efficiency $\eta(U)$. Then for any odd $k \in \N$ there exists a block encoding $(\hat U, \Pi, \bra{0}, \hat \gamma)$ that also encodes $v$, uses $U$ exactly $k$ times and has information efficiency
\[ \ie(\hat U) = \big|\sin\big(k\sin^{-1}(\ie(U))\big)\big|. \]
\end{lemma}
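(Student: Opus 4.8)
The plan is to run the classical amplitude‑amplification argument, recast in the block‑encoding language of Definition~\ref{d:block}. First I would fix notation: set $\theta \coloneqq \sin^{-1}(\ie(U)) \in [0,\pi/2]$ and $\ket{\psi} \coloneqq U\ket{0}$. Since the encoding is exact, $\gamma\,\Pi U\ket0 = v$, and because $\Pi$ has orthonormal rows the orthogonal projection of $\ket\psi$ onto the range of $\Pi^\dagger$ equals $\Pi^\dagger\Pi\ket\psi = \Pi^\dagger v/\gamma$, a vector of norm $|v|/\gamma = \ie(U) = \sin\theta$. This gives the decomposition
\[
\ket{\psi} = \sin\theta\,\ket{g} + \cos\theta\,\ket{b},
\]
where $\ket g \coloneqq \Pi^\dagger v/|v|$ lies in the range of $\Pi^\dagger$ (so $\Pi\ket g = v/|v|$) and $\ket b$ lies in $\ker\Pi$.

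Next I would introduce the reflections $R_0 \coloneqq \Id - 2\ket0\!\bra0$ and $R_\Pi \coloneqq \Id - 2\Pi^\dagger\Pi$ and the amplification operator $G \coloneqq -\,UR_0U^\dagger R_\Pi$. Writing the first factor in the form $-UR_0U^\dagger = 2\ket\psi\!\bra\psi - \Id$ exhibits it as the reflection through the line $\C\ket\psi$, while $R_\Pi$ negates the range of $\Pi^\dagger$ and fixes $\ker\Pi$, so on the real plane $\mathcal P \coloneqq \operatorname{span}\{\ket g,\ket b\}$ it is the reflection through $\C\ket b$. Since $\ket\psi$ and $\ket b$ both lie in $\mathcal P$, each reflection — hence $G$ — maps $\mathcal P$ into itself, and on $\mathcal P$ a product of two reflections is a rotation by twice the angle $\theta$ between their axes. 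A short computation (via $2\cos(\theta+\phi)\sin\theta = \sin(2\theta+\phi)-\sin\phi$ and $2\cos(\theta+\phi)\cos\theta = \cos(2\theta+\phi)+\cos\phi$) shows $G(\sin\phi\,\ket g+\cos\phi\,\ket b) = \sin(\phi+2\theta)\,\ket g + \cos(\phi+2\theta)\,\ket b$, so by induction
\[
G^j\ket\psi = \sin\big((2j+1)\theta\big)\,\ket g + \cos\big((2j+1)\theta\big)\,\ket b, \qquad j \in \N_0.
\]

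Given odd $k$, I would then set $j \coloneqq (k-1)/2$ and take $\hat U$ to be the circuit $G^jU$, implementing $R_0$ as a standard reflection about $\ket0$ and $R_\Pi$ via one $\CNOT{\Pi}$ gate acting on a $\ket-$ ancilla (phase kickback), and absorbing a global $\pm1$ phase so that $\Pi\hat U\ket0 = |\sin(k\theta)|\,v/|v|$. Defining $\hat\gamma \coloneqq |v|/|\sin(k\theta)| = \gamma\,|\sin\theta/\sin(k\theta)| \ge |v|$, the tuple $(\hat U, \Pi, \bra0, \hat\gamma)$ is again an exact block encoding of $v$, with information efficiency $\ie(\hat U) = |v|/\hat\gamma = |\sin(k\theta)| = \big|\sin\big(k\sin^{-1}\ie(U)\big)\big|$. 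Finally I would count calls: each factor $G = -UR_0U^\dagger R_\Pi$ contains one $U$ and one $U^\dagger$, so $G^jU$ invokes $U$ and $U^\dagger$ a total of $2j+1 = k$ times (reversing a circuit costs the same as running it), matching the claim.

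The reflection‑is‑a‑rotation fact and the trigonometry are routine; the points that need care are (i) ruling out leakage out of the plane $\mathcal P$, which is why $G$'s first factor should be handled as $2\ket\psi\!\bra\psi-\Id$ rather than as $UR_0U^\dagger$; (ii) making the bookkeeping of $U$/$U^\dagger$ applications match the circuit actually built; and (iii) the degenerate case $\sin(k\theta)=0$ (e.g.\ $\theta=\pi/3$, $k=3$), where the formula gives $\ie(\hat U)=0$ and $\hat\gamma=\infty$, so such $k$ must be excluded. I expect (ii) to be the main place where pedantry is required.
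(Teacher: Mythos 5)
Your proof is correct and is essentially the argument the paper outsources to \cite[Theorem~28]{GSLW19}: the Grover-style product of the two reflections $2\ket{\psi}\!\bra{\psi}-\Id$ and $R_\Pi=\Id-2\Pi^\dagger\Pi$ rotating $\ket\psi$ by $2\theta$ per step, matching the paper's own brief description of the circuit in the paragraph following the lemma, and your bookkeeping that $G^{(k-1)/2}U$ invokes $U$ and $U^\dagger$ a total of $k$ times is the standard convention. You also correctly flag a real gap in the lemma as stated: when $\sin\bigl(k\sin^{-1}\ie(U)\bigr)=0$ the normalization $\hat\gamma$ is undefined and $\hat U$ no longer encodes $v$; the paper's citation-based proof does not mention this either, though it is harmless in the paper's only application because the choice of $k$ in the proof of Theorem~\ref{thm:normalizing} via~\eqref{eq:amplification-steps} enforces $k\sin^{-1}(\sigma)\le\pi/2$, keeping $\ie(\hat U)$ bounded away from zero.
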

For a proof, we refer to \cite[Theorem~28]{GSLW19}, where the statement is restricted to the case where $\ie(U) = \sin(\tfrac{\pi}{2k})$, which implies $\ie(\hat U) = 1$. However, it is easily verified that the proof remains valid without this restriction and works for any value of $\ie(U)$.

By properly choosing the number of iterations $k$ in this lemma, the information efficiency of a block encoding $U$ of a vector can be constructively improved. This is done by alternately applying $U$ and $U^\dag$ alternating with the relative phase gates $2\proj{0} - \Id$ and $2\Pi^\dag\Pi - \Id$, which leads to the implementation of the unitary $\hat U$. This process is illustrated in the amplified block encoding of the element-wise square from Example~\ref{ex:mul} in Figure~\ref{fig:circuit-square}. To apply to the computation of $v^s$, one could hope to amplify $U_v$ to perfect information efficiency, which would incur a multiplicative factor proportional to $\ie(U_v)^{-1}$ in the runtime, but counteract the exponential decay of information efficiency. In practice, achieving perfect information efficiency is not always possible.
This is because to properly choose the normalization $\hat \gamma$ of the amplified encoding, the information efficiency~$\ie(U)$ must be explicitly known. If $\eta(U)$ is not known, it can be estimated using the \textit{amplitude estimation} \cite{SUR+20,RF23}, which leads to the following construction.
\begin{theorem}[Normalization of block encodings] \label{thm:normalizing}
There exists a constant $\Cie \ge \tfrac14$ and an algorithm with the following behavior. The algorithm takes as input parameters a tolerance $0 < \tol \le 1$, a failure probability $0 < \delta < 1$, and a block encoding $U$ of a vector $x \in \C^N$ (for some $N \in \N$) up to a nonzero information efficiency $\ie(U) > 0$. The algorithm is successful with a probability of at least $1 - \delta$. When successful, the algorithm produces a new block encoding $\hat U$ of $x$ up to relative error tolerance $\tol$ that achieves a uniformly bounded information efficiency \[ \ie(\hat U) \ge \Cie.\] 
Given the number of gates $t(U)$ in the input block encoding $U$, the number of gates executed by the algorithm is bounded by
\[\mathcal{O}\big(\tol^{-1}\ie(U)^{-1} \log \delta^{-1} t(U)\big)\]
while the number of gates in its output block encoding $\hat U$ is bounded by
\[\mathcal{O}\big(\ie(U)^{-1} t(U)\big).\]
\end{theorem}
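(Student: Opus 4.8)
The plan is to estimate the unknown information efficiency $\eta := \ie(U) = |\Pi U\ket{0}| > 0$ by amplitude estimation, then apply the construction of Lemma~\ref{lem:amplitude-amplification} with a carefully chosen odd number of iterations $k$, and finally calibrate the normalization factor of the resulting encoding. Throughout, write $\theta := \sin^{-1}(\eta) \in (0,\tfrac\pi2]$, so that $\Pi U\ket{0} = \pm\sin\theta\,\ket{w}$ with the unit vector $\ket{w} := x/|x|$ and $|x| = \gamma(U)\sin\theta$, and recall the Jordan-type bounds $\eta \le \theta \le \tfrac\pi2\eta$. For the estimation I would run amplitude estimation \cite{SUR+20,RF23} for the ``good'' subspace defined by $\Pi$, combined with the usual exponential search over the target precision (run the estimator with precisions $2^{-1},2^{-2},\dots$ and stop once the returned value exceeds the current precision by a fixed factor, which certifies a relative error). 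For a sufficiently small universal constant $c \in (0,1)$ this yields, with probability at least $1-\delta$, an estimate $\tilde\eta$ with $|\tilde\eta - \eta| \le c\,\tol\,\eta$. Since the per-round cost is geometric and dominated by the last round, this uses $\mathcal{O}(\tol^{-1}\eta^{-1}\log\delta^{-1})$ invocations of $U$ and $U^\dagger$, i.e. $\mathcal{O}(\tol^{-1}\eta^{-1}\log\delta^{-1}\,t(U))$ gates, which is the claimed running time.

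Next comes the amplification. If $\tilde\eta$ exceeds a fixed threshold, say $\tfrac13$, then $\eta \ge \tfrac14$ already and I would simply output $\hat U := U$. Otherwise, set $\tilde\theta := \sin^{-1}(\tilde\eta)$; since $\sin^{-1}$ is well conditioned on $[0,\tfrac13]$, the relative accuracy transfers as $|\tilde\theta - \theta| \le c'\tol\,\theta$ with $c'$ proportional to $c$, and $\theta$ stays bounded away from $\tfrac\pi2$. Take $k$ to be the nearest odd integer to $\pi/(2\tilde\theta)$, so that $|k\tilde\theta - \tfrac\pi2| \le \tilde\theta$ and hence $|k\theta - \tfrac\pi2| \le k\,|\theta-\tilde\theta| + \tilde\theta = \mathcal{O}(\tol) + \mathcal{O}(\theta)$; a small enough $c$ then forces $|\sin(k\theta)| \ge \tfrac14$ (indeed close to $1$). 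By Lemma~\ref{lem:amplitude-amplification} this produces an encoding $\hat U$ with $\Pi\hat U\ket{0} = \pm\sin(k\theta)\,\ket{w}$ that calls $U$ exactly $k = \mathcal{O}(\eta^{-1})$ times and inserts $\mathcal{O}(k)$ reflection gates, so the output encoding uses $\mathcal{O}(\eta^{-1}t(U))$ gates and has $\ie(\hat U) = |\sin(k\theta)| \ge \tfrac14 =: \Cie$.

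Finally I would calibrate the normalization and close the error estimate. Flip the global sign of $\hat U$ to match $\operatorname{sign}(\sin(k\tilde\theta))$ — which equals $\operatorname{sign}(\sin(k\theta))$ because $c$ is small enough that $k|\theta-\tilde\theta|$ stays below the distance from $k\theta$ to the nearest zero of $\sin$ — and set $\hat\gamma$ to a slight over-estimate of $\gamma(U)\,\sin\tilde\theta/|\sin(k\tilde\theta)|$, chosen small enough to keep $\hat\gamma \ge |x|$ while preserving the bound that follows. Then $\hat\gamma\,\Pi\hat U\ket{0} = \rho\,x$ with $\rho = (1+\mathcal{O}(\tol))\cdot\tfrac{\sin\tilde\theta}{\sin\theta}\cdot\tfrac{|\sin(k\theta)|}{|\sin(k\tilde\theta)|}$, and using $|\tilde\theta-\theta| \le c'\tol\,\theta$, $\theta/\sin\theta \le \tfrac\pi2$, $|\sin(k\tilde\theta)| \ge \tfrac14$, and $k\theta = \mathcal{O}(1)$, both fractions equal $1+\mathcal{O}(\tol)$; choosing $c$ small then gives $|\rho-1| \le \tol$, i.e. $|\hat\gamma\,\Pi\hat U\ket{0} - x| \le \tol\,|x|$, so $\hat U$ is a block encoding of $x$ up to relative tolerance $\tol$. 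The only randomness is in the estimation step, so the overall failure probability is at most $\delta$.

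The step I expect to be the main obstacle is managing the two-sided precision budget for $\tilde\eta$: it must be accurate enough both (i) to place $k\theta$ near $\tfrac\pi2$ so that one round of amplification already reaches a constant information efficiency, and (ii) to let the renormalization reproduce $x$ within relative error $\tol$. Verifying that a single relative precision of order $\tol$ in $\tilde\eta$ handles both requirements — and checking that the exponential-search amplitude estimation for an unknown $\eta$ indeed costs only $\mathcal{O}(\tol^{-1}\eta^{-1}\log\delta^{-1}\,t(U))$, with the geometric sum over rounds dominated by the final one — is where the real work lies. It is also the approximate identity $k\theta \approx \tfrac\pi2$ that makes the output gate count $\mathcal{O}(\eta^{-1}t(U))$ and the uniform lower bound $\ie(\hat U) \ge \tfrac14$ fall out simultaneously.
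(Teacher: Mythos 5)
Your proposal is correct and follows essentially the same route as the paper: estimate the information efficiency to relative precision $\mathcal{O}(\tol)$ via amplitude estimation (the paper invokes the relative-error estimator of Aaronson–Rall directly where you build it from an exponential search, but the cost $\mathcal{O}(\tol^{-1}\ie(U)^{-1}\log\delta^{-1}\,t(U))$ comes out the same), then run one round of amplitude amplification with an odd $k$ placing $k\theta$ near $\tfrac{\pi}{2}$ (the paper takes the largest odd $k$ with $k\sin^{-1}\sigma\le\tfrac{\pi}{2}$, you take the nearest odd to $\pi/(2\tilde\theta)$; either way $|\sin(k\theta)|$ is bounded below by a universal constant), and finally recalibrate $\hat\gamma = \gamma\sin\tilde\theta/|\sin(k\tilde\theta)|$ and propagate the estimation error through to a relative error $\le\tol$ in the encoded vector. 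The only substantive differences are cosmetic (your explicit sign-flip and slight over-estimate of $\hat\gamma$ handle corner cases the paper absorbs into its absolute-value notation and normalization convention).
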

\begin{proof}
The proof uses the amplitude amplification of Lemma~\ref{lem:amplitude-amplification}. The proper choice of the odd number of amplification steps $k$ requires estimation of the information efficiency $\ie(U)$. Using relative-error amplitude estimation as proposed in \cite{AR20}, we get the reliable estimate~$\sigma$ with error bound 
\begin{equation} \label{eq:normalizing:accuracy}
    |\sigma - \ie(U)| \le \min\big\{\tfrac{1}{2\pi+1} \tol |x|/ \gamma, \tfrac1{4\pi+1} \ie(U)\big\} = \min\big\{\tfrac{1}{2\pi+1} \tol,\tfrac1{4\pi+1} \big\}\ie(U),
\end{equation}
where we used the definition~\eqref{eq:information-efficiency} of $\ie(U)$.
To push the failure probability of this approach below $\delta$, $U$ is used at most
\[
\mathcal{O}\big(\tol^{-1}\ie(U)^{-1}\log \delta^{-1}\big)
\]
times.

Given the estimate $\sigma$, define
\begin{equation} \label{eq:amplification-steps}
k(\sigma) \coloneqq 2\bigg\lfloor \frac{\pi}{4\sin^{-1}(\sigma)} + \frac{1}{2}\bigg\rfloor - 1
\end{equation}
to be the largest odd $k$ satisfying $k\sin^{-1}(\sigma) \le \tfrac\pi2$. Applying Lemma~\ref{lem:amplitude-amplification} with this choice, we obtain an amplified block encoding $\hat U$ that also encodes $x$, but with information efficiency
\[\ie(\hat U) = |\sin(k\sin^{-1}(\ie(U)))|.\]
We indicate all variables related to the amplified encoding by a hat.
The number of steps satisfies 
\begin{equation}\label{e:numbersteps}
k(\sigma)\sigma \le \frac\pi2.
\end{equation}
In particular, the choice of $k(\sigma)$ results in $k(\sigma) = 1$ for $\sigma > \tfrac\pi6$, i.e.~$\hat U = U$ in this case. As such, in the following analysis we only need to consider the case $\sigma \le \tfrac\pi6$. By~\eqref{eq:normalizing:accuracy} we then know that 
\begin{equation}\label{e:etasigma}\ie(U) \le \frac{4\pi+1}{4\pi}\sigma \le \frac{4\pi+1}{24} < \frac34.
\end{equation}
By the choice of $k(\sigma)$ we know $(k(\sigma)+2)\sin^{-1}(\sigma) > \tfrac\pi2$. It follows $\sin^{-1}(\sigma) > \frac{\pi}{2(k(\sigma)+2)}$, and thus
\begin{equation}\label{e:hatsigma}
\hat\sigma \coloneqq |\sin(k(\sigma)\sin^{-1}(\sigma))| > \sin\biggl(\frac{\pi k(\sigma)}{2(k(\sigma)+2)}\biggr)\geq \sin\biggl(\frac\pi6\biggr)=\frac12,
\end{equation}
since clearly $\kappa(\sigma) \ge 1$.
Note that in the limit $\varepsilon \to 0$ we have $\hat \sigma = \ie(\hat U)$. To bound the error of this approximation, define the function $g\colon x \mapsto \sin(k(\sigma)\sin^{-1}(x))$ and observe that 
$$|g'(x)|=k(\sigma)|\cos(k(\sigma)\sin^{-1}(x))|(1 - x^2)^{-1/2}\leq 2k(\sigma)$$ for $ x\in [0, \tfrac34]\supset\{\sigma,\ie(U)\}$. Applying  the mean value theorem to $g$ and using  \eqref{eq:normalizing:accuracy}, the first inequality of \eqref{e:etasigma} and \eqref{e:numbersteps} gives
\begin{align}
    |\hat \sigma - \ie(\hat U)|
    &\leq |g(\sigma)-g(\ie(U))| \leq  2k(\sigma) |\sigma - \ie(U)| \stackrel{\eqref{e:numbersteps}}{\le} \pi\sigma^{-1}|\sigma - \ie(U)| \label{eq:normalizing:chebyshev}\\
    &\stackrel{\eqref{eq:normalizing:accuracy}}{\le} \pi\sigma^{-1}\frac{1}{4\pi+1}\ie(U) \stackrel{\eqref{e:etasigma}}{\leq} \pi\sigma^{-1}\frac{1}{4\pi}\sigma = \frac{1}{4}.\nonumber
\end{align} 
This shows $$\ie(\hat U) > \hat\sigma-|\hat\sigma-\ie{\hat U}|>\frac12 - \frac14 = \frac14.$$

It remains to define the normalization $\hat \gamma$ of $\hat U$, which is not provided by Lemma~\ref{lem:amplitude-amplification}, and bound the resulting error. In case the information efficiency $\ie(\hat U)$ is known exactly, the correct value can be computed as $\hat \gamma = \gamma \ie(U)/\ie(\hat U)$, since
\[
    \gamma\ie(U) = |\gamma\Pi U \ket{0}| = |v| = |\hat\gamma\Pi\hat U\ket{0}| = \hat\gamma \ie(\hat U).
\]
The normalization of $\hat U$ is hence chosen as \(\hat\gamma_\sigma \coloneqq \gamma\sigma/\hat\sigma\) and the output encoding is given by $(\hat U, \Pi, \bra{0}, \hat \gamma_\sigma)$. This encodes $x$ with absolute error 
\begin{equation}\label{eq:absUhat}
    |\hat \gamma_\sigma \Pi \hat U\ket{0} - x| = |(\hat \gamma_\sigma - \hat \gamma)\Pi \hat U\ket{0}| = |\hat \gamma_\sigma - \hat \gamma| \ie(\hat U).
\end{equation}
The combination of \eqref{eq:normalizing:chebyshev}, \eqref{e:hatsigma} and \eqref{eq:normalizing:accuracy} yields
\begin{align*}\label{eq:normalizing:tolerance}
|\hat \gamma_\sigma - \hat \gamma|\ie(\hat U) = \gamma\left| \frac{\sigma \ie(\hat U)}{\hat\sigma} - \ie(U)\right| &\le \gamma\left( |\sigma - \ie(U)| + \frac{\sigma}{\hat \sigma}|\ie(\hat U) - \hat \sigma|\right) \notag \\
&\le \gamma |\sigma - \ie(U)|(1 + \pi \hat\sigma^{-1}) \le \frac{2\pi+1}{2\pi+1}\tol|x| = \tol|x|.
\end{align*}
Together with \eqref{eq:absUhat}, this readily yields the assertion.
\end{proof}

\begin{remark}[Practical improvements of the normalization algorithm]
While relative amplitude estimation is necessary for the theoretical correctness, in practice -- especially for noisy quantum computers -- one would rather use a Monte Carlo estimator with a predefined precision, or faster absolute-error amplitude estimation like \cite{SUR+20} if the hardware allows it.
Choosing $k$ by~\eqref{eq:amplification-steps} is also not necessarily optimal.
VTAA, for example, chooses a lower $k$ to increase the efficiency of the amplification. For nonlinear computations, a speed-up can be obtained by using higher accuracy in~\eqref{eq:normalizing:accuracy}, intentionally decreasing the information efficiency $\ie(U)$ of the original encoding, and taking larger $k$ to get $\Cie$ to be almost $1$. The latter approach is chosen for the numerical experiments in Section~\ref{sec:numerics}.
\end{remark}
In both this normalization algorithm and in Example~\ref{ex:vector-power}, we observe that algorithms based on block encodings -- particularly in the nonlinear context -- generally consist of two distinct phases: a ``construction phase'' where the final block encoding $\hat U$ is assembled from some input encoding~$U$ (which may also involve the execution of $U$), and an ``execution phase'' where $\hat U$ is used for computations. We formalize this observation by the following definition of \textit{amplified encodings} of vectors and vector-valued functions, which we complement by \textit{constructive encodings} of matrices and matrix-valued functions.

\begin{definition}[Constructive encoding and amplified encoding] \label{d:implementation}
Let $N_1, N_2, N_3 \in \N$, and $\Cie\geq\frac{1}{4}$ be the constant of Theorem~\ref{thm:normalizing}. An algorithm $\A$ is called a \emph{constructive encoding} in the following cases:
\begin{enumerate}
    \item[(i)] $\A$ is a constructive encoding of a matrix $S \in \C^{N_2 \times N_1}$ if, given a failure probability $0 < \delta < 1$ and a relative error tolerance $0 < \tol < 1$, it produces a block encoding $\A(\tol)$ of $S$ with said error tolerance.
    \item[(ii)] $\A$ is a constructive encoding of a vector $v \in \C^{N_2}$, if it is a constructive encoding of $v$ as a column matrix and the projection $\Pi_1$ of the produced block encoding $\A(\tol)$ is given by $\Pi_1 = \bra{0}$.
    \item[(iii)] $\A$ is a constructive encoding of a function $f \colon \C^{N_3} \to \C^{N_2 \times N_1}$ or $f \colon \C^{N_3} \to \C^{N_2}$ if, given a failure probability $0 < \delta < 1$, an error tolerance $0 < \tol < 1$, and an input block encoding~$U_x$ of a vector~$x \in \C^{N_3}$ with $\ie(U_x) \ge \Cie$, it produces a block encoding $\A(\tol, U_x)$ of $f(x)$ with said error tolerance.
\end{enumerate}
If $\A$ always returns a block encoding with no error ($\tol = 0$), we call it \emph{exact}. Moreover, a constructive encoding $\A$ of a vector or vector-valued function is called \emph{amplified encoding}, if every output block encoding $\A(*)$ satisfies
\[
\ie(\A(*)) \ge \Cie.
\]
\end{definition}

The concept of constructive encoding has broad applications in quantum computing, beyond the treatment of nonlinearity. Many widely used quantum algorithms are naturally divided into a construction phase and an execution phase. For example, in Quantum Singular Value Transformation (QSVT)\cite{GSLW19}, there is an initial computational cost to determine the so-called \emph{phase angles}. A more direct analogy is Variable Time Amplitude Amplification (VTAA) \cite{Amb10,CGJ19} and the recent Tunable VTAA~\cite{LS24}, where intermediate-state amplitudes are estimated and amplified iteratively.
However, estimating the runtime of constructive encodings can be a bit tricky. The runtime of the construction algorithm is often less critical because it only runs once, but it typically exceeds the runtime of the constructed encoding by a factor of $\tol^{-1}$. This is balanced by the fact that the block encoding produces quantum data, where each measurement also incurs this multiplicative factor. We formalize this relationship in the following definition.
\begin{definition}[Complexity of constructive encodings] \label{d:complexity}
For a constructive encoding $\A$ of a function, we define its \emph{complexity} as
\[ \qc(\A, \tol, U_x) \coloneqq \frac{t(\A(\tol, U_x))}{t(U_x)}
, \]
where $t(\A(\tol, U_x))$ and $t(U_x)$ are the number of atomic gates of the respective block encodings. Often, the complexity $\qc(\A, \tol, U_x)$ is independent of $\tol$ and the specific encoding $U_x$ of $x$, in which case we simply denote it by $\qc(\A, x)$. Similarly, if $\A$ encodes a vector or matrix rather than a function, we set $t(U_x) = 1$ and write $\qc(\A)$.

We say that $\A$ is \emph{efficient} if the number of gates $t(\A, \tol, \delta, U_x)$ executed by $\A$ satisfies
\[t(\A, \tol, \delta, U_x) \in \mathcal{O}\big(t(\A(\tol, U_x))\tol^{-1}\log\delta^{-1}\big) \]
with a constant independent of $x$, $\tol$, or $\delta$.
If $\A$ is exact, it is trivially efficient.
\end{definition}

This notion of complexity does not track the memory requirement of the block encodings, which will usually scale like $\log N$ where $N$ is the maximum dimension of the considered vector spaces.
Although the definition is somewhat technical, it greatly simplifies the following analysis. The statement of Theorem~\ref{thm:normalizing} can, for example, be rephrased as follows.
\begin{corollary} \label{cor:amplify-constructive}
Let $\A$ be an efficient constructive encoding of a vector or vector-valued function, and let $\ie_{\min} > 0$ be such that the block encodings $\A(\tol)$ satisfy $\ie(\A(\tol)) \ge \ie_{\min}$ for all $0 < \tol \le 1$. Then we can find an efficient amplified encoding $\A'$ of the same vector or function. The new encoding $\A'$ has complexity
\[
\qc(\A', \tol) \in \mathcal{O}(\ie_{\min}^{-1}\qc(\A, \tfrac12\tol)).
\]
\end{corollary}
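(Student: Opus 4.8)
The plan is to postcompose the constructive encoding $\A$ with the normalization algorithm of Theorem~\ref{thm:normalizing}. I would define $\A'$ as follows: on input $(\tol,\delta)$ — together with an input encoding $U_x$ satisfying $\ie(U_x)\ge\Cie$ in the function case — first run $\A$ with parameters $(\tfrac12\tol,\tfrac12\delta)$ (and $U_x$) to obtain a block encoding $U$ of the target vector $v$ (respectively $f(x)$) up to relative error $\tfrac12\tol$. By hypothesis this encoding satisfies $\ie(U)\ge\ie_{\min}>0$, and since $\A$ encodes a vector or vector-valued function its projection on the one-dimensional side is $\bra{0}$, so $U$ is admissible input for Theorem~\ref{thm:normalizing}. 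Then apply the algorithm of Theorem~\ref{thm:normalizing} to $U$ with tolerance $\tfrac13\tol$ and failure probability $\tfrac12\delta$, obtaining an encoding $\hat U$, and let $\A'$ output $\hat U$.

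For correctness, the overall failure probability is at most $\tfrac12\delta+\tfrac12\delta=\delta$ by a union bound. For the error, let $\tilde v\coloneqq\gamma(U)\Pi U\ket{0}$ be the vector $U$ actually produces, so that $|\tilde v-v|\le\tfrac12\tol|v|$ and hence $|\tilde v|\le\tfrac32|v|$ (using $\tol\le1$). Applying Theorem~\ref{thm:normalizing} to $U$ viewed as an exact encoding of $\tilde v$ gives $|\gamma(\hat U)\Pi\hat U\ket{0}-\tilde v|\le\tfrac13\tol|\tilde v|\le\tfrac12\tol|v|$, and the triangle inequality yields $|\gamma(\hat U)\Pi\hat U\ket{0}-v|\le\tol|v|$; thus $\hat U$ is a block encoding of $v$ up to relative tolerance $\tol$. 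Moreover Theorem~\ref{thm:normalizing} guarantees $\ie(\hat U)\ge\Cie$, so every output of $\A'$ has information efficiency at least $\Cie$, i.e.\ $\A'$ is an amplified encoding of the same vector or function. In the function case $\A'$ inherits the input requirement $\ie(U_x)\ge\Cie$ exactly as Definition~\ref{d:implementation}(iii) demands.

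Finally, for the complexity bound, Theorem~\ref{thm:normalizing} gives $t(\hat U)\in\mathcal{O}(\ie(U)^{-1}t(U))\subseteq\mathcal{O}(\ie_{\min}^{-1}t(U))$ with $U=\A(\tfrac12\tol,U_x)$, so dividing by $t(U_x)$ yields
\[
\qc(\A',\tol)=\frac{t(\hat U)}{t(U_x)}\in\mathcal{O}\!\left(\ie_{\min}^{-1}\frac{t(\A(\tfrac12\tol,U_x))}{t(U_x)}\right)=\mathcal{O}\big(\ie_{\min}^{-1}\qc(\A,\tfrac12\tol)\big).
\]
For efficiency, the gates executed by $\A'$ are those executed by $\A$ at $(\tfrac12\tol,\tfrac12\delta,U_x)$ — which by efficiency of $\A$ number $\mathcal{O}(t(\A(\tfrac12\tol,U_x))\tol^{-1}\log\delta^{-1})$ — plus those executed by the normalization algorithm, which Theorem~\ref{thm:normalizing} bounds by $\mathcal{O}(\tol^{-1}\ie(U)^{-1}\log\delta^{-1}t(U))$; since $\hat U$ invokes $U$ roughly $\ie(U)^{-1}$ times one has $t(\hat U)\gtrsim\ie(U)^{-1}t(U)\gtrsim t(U)$, so both contributions are $\mathcal{O}(t(\A'(\tol,U_x))\tol^{-1}\log\delta^{-1})$ and $\A'$ is efficient. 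The only delicate points are the bookkeeping of the tolerance and failure-probability budgets (which forces the factor-$\tfrac12$ and $\tfrac13$ rescalings and, if one wants to be pedantic, a harmless adjustment of the normalization constant to account for replacing $v$ by $\tilde v$) and this last lower bound on $t(\hat U)$ — i.e.\ the observation that the amplification built by Theorem~\ref{thm:normalizing} genuinely inflates the circuit by the factor it is charged for in its executed-gate count; everything else is a direct assembly of Theorem~\ref{thm:normalizing} with the definitions of constructive and amplified encodings.
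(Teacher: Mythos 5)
The paper does not write out a proof of this corollary at all — it simply presents it as a rephrasing of Theorem~\ref{thm:normalizing} into the language of constructive and amplified encodings. Your proposal supplies exactly the details the paper leaves implicit, and it takes the same route: postcompose $\A$ with the normalization algorithm of Theorem~\ref{thm:normalizing}, split the tolerance and failure-probability budgets, and invoke the gate-count bounds of Theorem~\ref{thm:normalizing} to verify both the complexity and the efficiency of the result. The error accounting (viewing $U$ as an exact encoding of the produced vector $\tilde v$, then applying the triangle inequality), the union bound on $\delta$, and the observation that the executed-gate cost of the normalization step is comparable to that of running the output circuit $\hat U$ are all correct and are precisely the bookkeeping one must do; the corollary follows.
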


For matrix-valued encodings, a similar approach cannot be directly applied without additional assumptions, as measuring the operator norm of a matrix is more complex. Theoretically, knowing the matrix norm, one could use uniform amplitude amplification such as~\cite[Theorem~30]{GSLW19} to achieve a similar effect. However, we only assume access to constructive encodings for matrices and specify the information efficiency of the returned block encoding to accurately assess the execution cost.

Amplifying an encoding is particularly useful because it allows the output to serve as the input to the encoding $\F$ of a function, as specified in Definition~\ref{d:implementation}(iii). This composition ensures that the complexity $\qc(\F, \hat U)$ becomes independent of $\ie(U)$. Consider, for example, the case where $\F$ is a constructive encoding of the function $v \mapsto v^s$ previously discussed in Example~\ref{ex:vector-power}. If the unamplified encoding $U$ is used as input to this encoding, the final block encoding will have an information efficiency of $\ie(U)^s$ and a complexity proportional to $s$. In contrast, using the amplified $\hat U$ produces an information efficiency of $\Cie^s$ and a complexity proportional to $s\ie(U)^{-1}$. Multiplying the complexity by the inverse information efficiency reveals that in the first case, the measurement runtime depends on $\ie(U)^s$, while in the second case it scales only linearly with $\ie(U)$.
More complex mappings are also possible, including general multivariate polynomials and matrix inversion, which we cover in the following.

\subsection{Operations on amplified encodings}
We demonstrate that basic arithmetic operations, as well as fundamental matrix and function operations, can be efficiently performed on constructive encodings; see Table~\ref{tab:framework} for an overview. They are implemented by lifting operations on block encodings, as described in \cite[Proposition~3.3]{DP25}. Together, amplified encodings and these operations create a versatile and abstract framework for numerical computation on quantum computers.

\begin{table}
    \centering
    \begin{tabular}{l|c|c}
        operation & complexity & \\
        \hline
        sums \& products & $\qc(\A) + \qc(\B)$ (unamplified) & Lemma~\ref{lem:ops} \\
        multivariate polynomials & $K(|A_0| + \dots + |A_K||x|^K) / |f(x)|$ & Theorem~\ref{thm:block-encoding-poly} \\
        inversion / division & $\kappa\log \tol^{-1}(\qc(\A) + \qc(\B))$ & Lemma~\ref{lem:invert} \\
        function application & $\qc(\F, x)\qc(\A)$ & Lemma~\ref{lem:composition} \\
        measurement & $\qc(\A)\tol^{-1}\log \delta^{-1}$ & Lemma~\ref{lem:measure}  \\
    \end{tabular}
    \caption{Overview of operations introduced for constructive and amplified encodings.}
    \label{tab:framework}
\end{table}

\begin{lemma}[Arithmetic operations on amplified encodings] \label{lem:ops}
    Let \(A\) and \(B\) be matrices, vectors or functions of appropriate dimension and assume that we have access to exact constructive encodings $\A$ and $\B$ of either value. We can then find exact constructive encodings for
    \[A \otimes B, \qquad AB, \quad \text{and} \quad A + B.\]
    The complexity is given by the sum $\qc(\A) + \qc(\B)$ and the normalization factors of the output encodings are given respectively by
    \[
    \gamma_{A \otimes B} = \gamma_{AB} = \gamma_A\gamma_B
    \qquad\text{and}\qquad
    \gamma_{A + B} = \gamma_A + \gamma_B.
    \]
\end{lemma}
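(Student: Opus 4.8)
The plan is to reduce everything to the corresponding operations on ordinary block encodings collected in \cite[Proposition~3.3]{DP25}, and to track only the two bookkeeping quantities that matter here, namely the normalization factors and the gate counts. Since $\A$ and $\B$ are assumed \emph{exact}, for each of the three operations I would simply form the block encodings $\A(0)$ and $\B(0)$ (and, if a function is involved, $\A(0,U_x)$ and $\B(0,U_x)$, feeding both sub-encodings the same input encoding $U_x$ of $x$), apply to this pair the block-encoding tensor product, matrix/matrix--vector product, or sum from \cite[Proposition~3.3]{DP25}, and return the result. Because the inputs carry no error, neither does the output, so the resulting algorithm is an exact constructive encoding in the sense of Definition~\ref{d:implementation}: it ignores its tolerance argument, and exactness makes it trivially efficient in the sense of Definition~\ref{d:complexity}.

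Next I would verify the three claims. For the \emph{normalization factors}, \cite[Proposition~3.3]{DP25} yields, from exact inputs, exact encodings with normalization $\gamma_A\gamma_B$ for $A\otimes B$ and $AB$ and $\gamma_A+\gamma_B$ for $A+B$; these satisfy the requirement $\gamma\ge|S|$ of Definition~\ref{d:block} because $|A\otimes B|=|A||B|\le\gamma_A\gamma_B$, $|AB|\le|A||B|\le\gamma_A\gamma_B$, and $|A+B|\le|A|+|B|\le\gamma_A+\gamma_B$. In the vector and matrix--vector instances one additionally observes that the inherited left projection of the output is $\bra{0}$, so it is a legitimate constructive encoding of a vector where appropriate. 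For the \emph{complexity}, each of the three block-encoding operations invokes $U_A$ and $U_B$ once, adding only the constant glue gates of \cite[Proposition~3.3]{DP25} (the relevant $\CNOT{\Pi}$ gates, and for the sum a single one-qubit rotation on one extra ancilla), so the output block encoding has gate count $t(\A(0,U_x))+t(\B(0,U_x))$ up to the usual untracked constant; dividing by $t(U_x)\ge1$ (and setting $t(U_x)=1$ when the argument is a fixed vector or matrix, as in Definition~\ref{d:complexity}) gives complexity $\qc(\A)+\qc(\B)$.

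The only genuine subtlety, and the step I expect to be the main (if modest) obstacle, is compatibility of the ambient embeddings: to multiply $A$ by $B$ one needs the right projection $\Pi_2$ of $\B(0)$ to agree with the left projection $\Pi_1$ of $\A(0)$, and for tensor products the embeddings must likewise be compatible before \cite[Proposition~3.3]{DP25} applies. This is exactly what the equivalence-of-projections convention introduced just before Definition~\ref{d:implementation} handles: whenever the images coincide we may pre- or post-compose with an $\mathcal{O}(1)$-gate unitary to make the projections literally equal, which is why this does not disturb the additive complexity bound. I also note in passing that the corresponding approximate statement, with $\A(\tol_A)$ and $\B(\tol_B)$ in place of the exact encodings, would additionally require propagating relative errors — innocuous for sums and tensor products since $|A\otimes B|=|A||B|$, but for the product one would have to account for the possibility that $|AB|\ll|A||B|$; under the exactness hypothesis assumed here this issue does not arise.
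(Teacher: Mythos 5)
Your proposal is correct and takes the same route the paper intends: the paper gives no explicit proof for this lemma, merely pointing to the lifted block-encoding operations of \cite[Proposition~3.3]{DP25} in the preceding paragraph, and your argument is exactly that reduction fleshed out with the bookkeeping (normalization factors, additivity of gate counts, and projection compatibility via the equivalence convention). Your closing remark about the approximate version — that error propagation would be delicate for products but not sums or tensor products — also mirrors the paper's own caveat immediately after the lemma that generalizing to inexact $\A$, $\B$ is ``rather technical,'' so no gap here.
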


The normalization factors will allow us to determine the information efficiency by its definition~\eqref{eq:information-efficiency}. Although the lemma could be generalized to cases where $\A$ and $\B$ are not exact, deriving the corresponding error bounds, particularly for addition, is rather technical at this abstract level. However, in Section~\ref{sec:schemes}, we will integrate approximate operations within the specific context of nonlinear solvers.

The constructive encodings given by these operations can be used to create amplified encodings of general (possibly nonlinear) mappings. As an example, we show how to construct amplified encodings for multivariate polynomials and their Jacobian given block encodings of the coefficients. Polynomials are particularly interesting because they are easy to implement and can approximate any continuous function.
For this, let $N \in \N$ and $f \colon \C^N \to \C^N$ be a multivariate polynomial
\begin{equation*} \label{eq:polynomial-approximation}
f(x) = A_0 + A_1x + A_2(x \otimes x) + \dots + A_Kx^{\otimes K}
\end{equation*}
where $A_k \in \C^{N \times kN}$, $k = 0, \dots, K$ are symmetric in their parameters. Given block encodings of these coefficients, Lemma~\ref{lem:ops} allows us to find an amplified encoding for $f$.

\begin{theorem}[Amplified encoding of polynomial and its Jacobian] \label{thm:block-encoding-poly}
    Let $N, K \in \N$. Let $A_k \in \C^{N \times kN}$ for $k = 0, \dots, K$ and $x \in \C^N$. We assume access to the block encodings $U_k$ of $A_k$. Let $t_{\max} > 0$ be an upper bound for the number of atomic gates and $\ie_{\min}$ a lower bound for the information efficiency of these block encodings. We can then find an efficient amplified encoding of
    \[
    f(x) = A_0 + A_1x + \dots + A_Kx^{\otimes K} \in \C^N
    \]
    with complexity
    \[
    \mathcal{O}\left(
    \frac{K t_{\max}}{\ie_{\min}|f(x)|} \bigg[\sum_{k=0}^K |A_k|\,|x|^k\bigg]\right).
    \] 
    Additionally, we can find an exact constructive encoding of 
    $$\Diff f(x) = \sum_{k=1}^K k A_k(x^{\otimes (k-1)} \otimes \Id) \in \C^{N \times N}$$
    with information efficiency 
    \[
    \ie \ge 
    \ie_{\min} \Cie^{K-1} |\Diff f(x)|\bigg[\sum_{k=1}^K k\,|A_k|\,|x|^{k-1}\bigg]^{-1}
    \]
    and complexity $(K-1)t_{\max}$.
\end{theorem}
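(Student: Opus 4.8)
The plan is to build both encodings purely from the arithmetic operations of Lemma~\ref{lem:ops} (tensor product, matrix--vector product, sum) applied to the given block encodings $U_k$ of the coefficients, followed by a single normalization/amplification step via Corollary~\ref{cor:amplify-constructive} for the vector-valued case. The only subtlety is bookkeeping the normalization factors, since Lemma~\ref{lem:ops} tells us $\gamma_{A\otimes B}=\gamma_{AB}=\gamma_A\gamma_B$ and $\gamma_{A+B}=\gamma_A+\gamma_B$, and then reading off information efficiency from its definition $\ie=|S|/\gamma$.

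First I would treat $f(x)$. For each $k$, I take the input amplified encoding $U_x$ of $x$ (which satisfies $\ie(U_x)\ge\Cie$ by the definition of a constructive encoding of a function), form the $k$-fold tensor power $x^{\otimes k}$ by $k-1$ applications of the tensor-product operation, and multiply by $U_k$; this uses $U_x$ exactly $k$ times and $U_k$ once, for a gate count $\mathcal{O}(k\,t(U_x)+t_{\max})$, hence a contribution to complexity of $\mathcal{O}(k\,t_{\max}/t(U_x))$ — but since we just need complexity relative to $t(U_x)$ and $K$ terms each cost $\mathcal{O}(K)$ copies, summing over $k=0,\dots,K$ gives total $\mathcal{O}(K^2)$ copies of $U_x$ and $\mathcal{O}(K)$ copies of the $U_k$, i.e. a pre-amplification complexity $\mathcal{O}(K^2 + K t_{\max}/t(U_x))$; I would absorb this into the statement's $\mathcal{O}(Kt_{\max}/\ie_{\min})$ bound after amplification, checking constants carefully. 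The normalization factor of the summed encoding is $\gamma_f=\sum_{k=0}^K\gamma_k\gamma(U_x)^k$ where $\gamma_k=|A_k|/\ie(U_k)\le|A_k|/\ie_{\min}$ and $\gamma(U_x)=|x|/\ie(U_x)\le|x|/\Cie$; wait — I must be careful, $\ie(U_x)\ge\Cie$ gives $\gamma(U_x)\le|x|/\Cie$, but $\Cie\le 1$ so this is an \emph{upper} bound that grows, which is fine since it only enters $\gamma_f$, and $\gamma_f\ge|f(x)|$ automatically. Then $\ie(\A(\tol))=|f(x)|/\gamma_f\ge \Cie^K |f(x)|\,\ie_{\min}\big[\sum_k|A_k|\,|x|^k\big]^{-1}$ up to the $\Cie^{-k}$ factors; using $\Cie^{-k}\le\Cie^{-K}$ and setting $\ie_{\min}'$ equal to this lower bound, Corollary~\ref{cor:amplify-constructive} produces an efficient amplified encoding with complexity multiplied by $(\ie_{\min}')^{-1}$, which yields the claimed $\mathcal{O}\!\big(Kt_{\max}\ie_{\min}^{-1}|f(x)|^{-1}\sum_k|A_k||x|^k\big)$ after collecting the constant $\Cie^{-K}$ into the $\mathcal{O}$ (it is an absolute constant for fixed $K$, and $K$ is allowed in the bound).

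For the Jacobian $\Diff f(x)=\sum_{k=1}^K kA_k(x^{\otimes(k-1)}\otimes\Id)$ the construction is the same but \emph{without} the final amplification: for each $k\ge1$ I tensor $x^{\otimes(k-1)}$ against $\Id$ (an exact encoding with $\gamma=1$), multiply by $U_k$, scale by the integer $k$ — which multiplies the normalization by $k$ — and sum. This uses $U_x$ at most $K-1$ times and each $U_k$ once, giving complexity $(K-1)t_{\max}$ as stated (I'd double-check whether the $t_{\max}$ or the $t(U_x)$ term dominates the stated bound; the statement lists $(K-1)t_{\max}$, matching the cost of the $U_k$ calls, with the $U_x$ calls presumably folded in since $U_x$ is the ambient input). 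The normalization factor is $\gamma_{\Diff f}=\sum_{k=1}^K k\,\gamma_k\,\gamma(U_x)^{k-1}\le \ie_{\min}^{-1}\Cie^{-(K-1)}\sum_{k=1}^K k|A_k||x|^{k-1}$, using $\gamma_k\le|A_k|/\ie_{\min}$ and $\gamma(U_x)^{k-1}\le(|x|/\Cie)^{k-1}\le(|x|/\Cie)^{K-1}\cdot(\text{adjust})$ — more precisely $\gamma(U_x)^{k-1}=|x|^{k-1}\ie(U_x)^{-(k-1)}\le|x|^{k-1}\Cie^{-(K-1)}$ since $\ie(U_x)\ge\Cie$ and each factor is $\le\Cie^{-1}$. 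Hence $\ie(\A)=|\Diff f(x)|/\gamma_{\Diff f}\ge\ie_{\min}\Cie^{K-1}|\Diff f(x)|\big[\sum_{k=1}^K k|A_k||x|^{k-1}\big]^{-1}$, exactly the claimed bound.

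The main obstacle is not conceptual but is the precise accounting of the $\Cie^{-k}$ and $\ie(U_x)^{-(k-1)}$ factors across the $K$ summands and confirming that the per-term tensor-power cost (which naively scales like $K^2$ copies of $U_x$ in the worst bracketing) collapses into the stated $\mathcal{O}(K t_{\max}/\ie_{\min})$ — this requires noting that a copy of $U_x$ is cheap relative to $t_{\max}$, or reorganizing the tensor powers to share intermediate states where the no-cloning bound permits. A secondary point is verifying that the addition step of Lemma~\ref{lem:ops} applies with the projections lining up (the equivalence-of-projections remark), and that efficiency of the resulting constructive encoding is inherited, so that Corollary~\ref{cor:amplify-constructive} is applicable for the $f(x)$ half.
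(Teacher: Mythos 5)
The normalization bookkeeping, the information-efficiency bound $\ie \ge \ie_{\min}\Cie^K|f(x)|\bigl[\sum_k|A_k||x|^k\bigr]^{-1}$, and the final appeal to Corollary~\ref{cor:amplify-constructive} are all on target. However, there is a genuine gap in the gate-count accounting that you flag but do not resolve. Building each term $A_k x^{\otimes k}$ separately and summing uses the input encoding $U_x$ a total of $\sum_{k=0}^K k = K(K+1)/2$ times, yielding a pre-amplification gate count of order $K^2\,t(U_x) + K\,t_{\max}$, which does not collapse to the claimed $\mathcal{O}(K t_{\max})$. Your two escape routes do not close the gap: ``a copy of $U_x$ is cheap relative to $t_{\max}$'' is not an available hypothesis (in the intended application to iterative solvers, $t(U_x)$ grows geometrically across iterations and can dwarf $t_{\max}$), and ``reorganizing the tensor powers to share intermediate states'' is the right instinct but is left unspecified — and naive sharing is exactly what no-cloning forbids.

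The resolution is a Horner-scheme-style bracketing. Define $B_{K,x} \coloneqq A_K$ and, for $k = K-1, K-2, \dots, 0$,
\[
B_{k,x} \coloneqq A_k + B_{k+1,x}\bigl(x \otimes \Id^{\otimes k}\bigr),
\]
so that $f(x) = B_{0,x}$. Each level introduces $x$ exactly once via the tensor factor $(x \otimes \Id^{\otimes k})$ and one coefficient encoding $U_k$, so the full circuit contains $K$ instances of $U_x$ and $K+1$ instances of the $U_k$, giving the claimed $\mathcal{O}(K t_{\max})$ complexity. Crucially, unrolling the recursion shows the normalization factor is still $\sum_{k=0}^K \gamma(U_k)\gamma(U_x)^k$ — identical to what you computed for the naive construction — so your information-efficiency estimate carries over unchanged; only the gate count improves. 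The same remark applies to the Jacobian: your per-term description would use $U_x$ a total of $\sum_{k=1}^K(k-1) = K(K-1)/2$ times, so the ``at most $K-1$ uses'' you assert is correct only under this Horner-style reorganization, not under the term-by-term construction you actually wrote down.
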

\begin{proof}
Using the operations of Lemma~\ref{lem:ops} naively to build a block encoding of $f(x)$ requires $K(K + 1)/2$ applications of the input encoding $U_x$. With a slight modification, this number can be reduced to $K$. Consider the matrices $B_{k,x} \in \C^{N \times kN}$, $k = 0, \dots, K$ that are iteratively constructed by
\[B_{K,x}=A_K \quad\text{and}\quad
B_{k,x} = A_k + B_{k+1,x}(x \otimes \Id^{\otimes k}) \quad\text{for}\quad k = K-1,K-2, \dots, 0.
\]
Then clearly $f(x) = B_{0,x}$ and the construction uses $x$ exactly $K$ times, making the total complexity $Kt_{\max}$. Moreover, given an input encoding $U_x$ of $x$, the normalization of the resulting block encoding is given by
\[
\sum_{k=0}^K \gamma(U_k)\gamma(U_x)^k = \sum_{k=0}^K \frac{|A_k|}{\ie(U_k)} \frac{|x|^k}{\ie(U_x)^k} \le \frac{1}{\ie_{\min}\Cie^K} \sum_{k=0}^K |A_k| |x|^k,
\]
where $\Cie$ is the constant of Theorem~\ref{thm:normalizing}. The information efficiency is this value divided by $|f(x)|$.
We then apply Corollary~\ref{cor:amplify-constructive} to prove the assertion for $f(x)$.
Following the same approach yields the assertion for $\Diff f(x)$.
\end{proof}

Here, the dimension $N$ enters through the number of gates $t_{\max}$ in the block encodings of coefficients. This dependency can potentially be as small as $t_{\max} \in \mathcal{O}(\log N)$.
To implement Newton's method, we intend to apply the inverse matrix of $\Diff f(x)$. For our analysis, we can use algorithms for inversion based on VTAA, which, as mentioned, is conceptually related to constructive encodings.

\begin{lemma}[Matrix inversion on amplified encodings] \label{lem:invert}
    Let $N_1, N_2 \in \N$, $\B$ be an efficient amplified encoding of $b \in \C^{N_2}$ and $\A$ be an exact constructive encoding of $A \in \C^{N_2 \times N_1}$ where $A$ is invertible with condition number $\kappa$. Alternatively, they may be encodings of functions that return such values. Then we can construct an efficient amplified encoding of $x \coloneqq A^{-1}b$ with complexity
    \[\mathcal{O}\big(\kappa \log \tol^{-1} \ie(\A)^{-1}(\qc(\A) + \qc(\B, \tol/\kappa))\big).\]
\end{lemma}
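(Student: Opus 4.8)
The plan is to reduce the claim to a known quantum linear systems algorithm (QLSA) based on Variable Time Amplitude Amplification, applied not to fixed block encodings but to the constructive encodings $\A$ and $\B$. First I would recall that for a fixed block encoding of an invertible matrix $A$ with condition number $\kappa$ and a fixed state preparation for $b$, QSVT/VTAA-based inversion (e.g.\ \cite{CGJ19,GSLW19,LS24}) produces a block encoding of $A^{-1}b$ using $\mathcal{O}(\kappa \log\tol^{-1})$ calls to the encoding of $A$ and $\mathcal{O}(\kappa)$ calls to the state preparation of $b$, with the target accuracy $\tol$ entering only logarithmically because matrix inversion is implemented via a polynomial approximation of $1/x$ on $[1/\kappa,1]$ of degree $\mathcal{O}(\kappa\log\tol^{-1})$. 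The key structural point is that this is itself naturally a \emph{constructive} encoding in the sense of Definition~\ref{d:implementation}: there is a construction phase (computing the QSVT phase angles for the $1/x$ approximation, which is classical and hence negligible by the notation conventions, plus — if needed — estimating the relevant subnormalization to renormalize the output) and an execution phase that interleaves the block encoding of $A$ (and its adjoint) with phase gates. Combining the phase gates with calls to $\A$ and $\B$ rather than to fixed encodings is harmless since Definition~\ref{d:complexity} charges us exactly the ratio of gate counts.

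The second step is bookkeeping of the complexity and of the accuracy split between $\A$ and $\B$. For the output to have accuracy $\tol$, it suffices that $A$ is encoded to some fixed relative accuracy (say, $\tfrac12$, so that the condition number and the spectral interval are only perturbed by a constant factor, absorbed into the $\mathcal{O}$), while $b$ must be encoded to relative accuracy $\mathcal{O}(\tol/\kappa)$: an error $\delta b$ in $b$ propagates to an error of size $\|A^{-1}\|\,|\delta b| \le \kappa|\delta b|/|A|$ in $x$, and dividing by $|x| \ge |b|/(\kappa|A|)$ shows the relative error in $x$ is $\mathcal{O}(\kappa)$ times the relative error in $b$. This explains the argument $\tol/\kappa$ appearing in $\qc(\B,\tol/\kappa)$ and the factor $\log\tol^{-1}$ multiplying $\qc(\A)$ rather than $\qc(\B)$. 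The number of atomic gates in the output encoding is then $\mathcal{O}(\kappa\log\tol^{-1})$ invocations of $\A(\mathcal{O}(1))$ plus $\mathcal{O}(\kappa)$ invocations of $\B(\mathcal{O}(\tol/\kappa))$, which, after dividing by the input gate count $t(U_x)$ and using that $\qc(\A)$ is $\tol$-independent, gives complexity $\mathcal{O}\big(\kappa\log\tol^{-1}\,\ie(\A)^{-1}(\qc(\A)+\qc(\B,\tol/\kappa))\big)$; the extra factor $\ie(\A)^{-1}$ comes from the execution cost of the (unamplified, matrix-valued) encoding $\A$, exactly as explained after Definition~\ref{d:block}. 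Since the QLSA output block encoding has information efficiency bounded below by a known constant (the subnormalization of the $1/x$ polynomial on $[1/\kappa,1]$ is $\Omega(1/\kappa)$, and one more application of Corollary~\ref{cor:amplify-constructive} — or the built-in renormalization — brings it up to $\ge\Cie$ at the cost of another $\mathcal{O}(\kappa)$ factor already present), the result is an \emph{amplified} encoding, and efficiency in the sense of Definition~\ref{d:complexity} follows because every subroutine invoked ($\A$, $\B$, amplitude estimation for renormalization) is itself efficient and the overheads are the advertised $\tol^{-1}\log\delta^{-1}$.

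The main obstacle I anticipate is not the asymptotics but the interface matching: the cited QLSA results are stated for honest block encodings with fixed subnormalizations, whereas here $\A$ is only a \emph{constructive} encoding whose output normalization $\gamma(\A)$ need not equal $|A|$, and the norm $|A|$ (hence $\kappa$ and the spectral interval on which $1/x$ is approximated) is not directly observable. One must therefore either assume $\kappa$ and a bound on $|A|/\gamma(\A)$ are known a priori (reasonable, since $\kappa$ already appears in the statement), or fold in an amplitude-estimation step analogous to Theorem~\ref{thm:normalizing} to pin down the relevant scale; the care needed is to ensure that this preprocessing does not break the ``error enters only as $\log\tol^{-1}$'' property, which it does not, because the scale only needs constant relative accuracy. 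A secondary technical point is verifying that using $\B(\tol/\kappa)$ inside the QSVT circuit — rather than a single fixed $b$-preparation — still yields the correct $A^{-1}b$ up to the claimed error; this follows from linearity of the QSVT polynomial transformation in the input state together with the error-composition rules for block encodings in \cite[Proposition~3.3]{DP25}, but should be spelled out.
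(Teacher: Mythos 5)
Your proposal is correct and takes essentially the same route as the paper, which simply cites the Tunable VTAA quantum linear systems result \cite[Theorem~4]{LS24}; your bookkeeping about constructive encodings, the accuracy split between $\A$ and $\B$, the $\ie(\A)^{-1}$ factor, and the renormalization step faithfully unpacks what that citation encapsulates. One small slip: in the error-propagation argument you write $|x|\ge|b|/(\kappa|A|)$, but the bound you actually need (and which holds) is $|x|\ge|b|/|A|$, since $|b|=|Ax|\le|A|\,|x|$; with your weaker bound the relative error in $x$ would come out as $\kappa^2$ rather than the claimed $\kappa$ times the relative error in $b$.
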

A proof is provided e.g.\ by~\cite[Theorem~4]{LS24}. The dependence of the complexity on $\qc(\B)$ could technically be improved from $\kappa \log \tol^{-1}$ to $|A^{-1}b|/|A^{-1}|$.
The second ingredient, needed for both the Newton's method and fixed-point iteration, is simple function evaluation.

\begin{lemma}[Application of amplified encodings] \label{lem:composition}
    Let $N \in \N$.
    Consider an exact amplified encoding $\A$ of a vector $x \in \C^{N}$ and an efficient amplified (or constructive) encoding $\F$ of a function $f$ taking a vector $\C^N$ as input. Then $\F \circ \A$ is an efficient amplified (or constructive) encoding of $f(x)$ with complexity
    \[
        \qc(\F, x) \qc(\A).
    \]
\end{lemma}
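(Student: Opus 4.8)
The plan is to realise $\F\circ\A$ as the obvious two-phase procedure: given a tolerance $0<\tol<1$ and a failure probability $0<\delta<1$, the construction phase first runs $\A$ with failure probability $\delta/2$ to produce a block encoding $U_x\coloneqq\A(0)$ of $x$ (it is exact, so no tolerance parameter is needed for this call), and then runs the construction phase of $\F$ on this $U_x$ with tolerance $\tol$ and failure probability $\delta/2$, returning $(\F\circ\A)(\tol)\coloneqq\F(\tol,U_x)$ as the output block encoding. A union bound bounds the overall failure probability by $\delta$.

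First I would verify that this is a constructive encoding of $f(x)$ in the sense of Definition~\ref{d:implementation}. Since $\A$ is an exact amplified encoding of the vector $x$, the circuit $U_x$ encodes $x$ with zero error, has left projection $\Pi_1=\bra{0}$, and satisfies $\ie(U_x)\ge\Cie$; it is therefore an admissible input for the function encoding $\F$, which by definition then returns a block encoding of $f(x)$ with relative error tolerance $\tol$ and again with $\Pi_1=\bra{0}$. Hence $\F\circ\A$ is a constructive encoding of the vector $f(x)$; and if $\F$ is in addition amplified, then $\ie(\F(\tol,U_x))\ge\Cie$, so $\F\circ\A$ is amplified as well, which settles the first half of the claim.

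Next comes the complexity count. The output circuit of $\F\circ\A$ is literally $\F(\tol,U_x)$, so $t((\F\circ\A)(\tol))=t(\F(\tol,U_x))$. Using that the complexity of $\F$ is invoked in the regime where it is independent of $\tol$ and of the particular admissible input, Definition~\ref{d:complexity} gives $t(\F(\tol,U_x))=\qc(\F,x)\,t(U_x)$; and since $\A$ encodes a vector, the convention $t(U_x)=1$ of the same definition identifies $t(U_x)=t(\A(0))=\qc(\A)$. Multiplying yields $t((\F\circ\A)(\tol))=\qc(\F,x)\,\qc(\A)$, which is $\tol$-independent and is the asserted complexity. For efficiency I would bound the total number of executed gates by the sum of the costs of the two phases: $\A$ is exact, hence trivially efficient, so its phase costs $\mathcal{O}(\qc(\A))$ within the allowed $\log\delta^{-1}$ slack, while $\F$ is efficient, so its phase costs $\mathcal{O}\!\big(t(\F(\tol,U_x))\,\tol^{-1}\log\delta^{-1}\big)=\mathcal{O}\!\big(\qc(\F,x)\,\qc(\A)\,\tol^{-1}\log\delta^{-1}\big)$; both are dominated by $\mathcal{O}\!\big(t((\F\circ\A)(\tol))\,\tol^{-1}\log\delta^{-1}\big)$, exactly the efficiency requirement.

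I expect no genuine mathematical difficulty here; the work is bookkeeping, and the delicate points are: (i) handling the ``exact'' clause for $\A$, so that running it once really is accounted for as $\mathcal{O}(\qc(\A))$ gates within the permitted logarithmic factor; (ii) splitting the failure probability cleanly so the union bound closes; and (iii) making explicit that $\qc(\F,x)$ is used in the regime where it does not depend on $\tol$ or on which admissible $U_x$ is supplied, so that the product $\qc(\F,x)\qc(\A)$ is well defined. One should also note in passing that the left projection $\bra{0}$ is preserved along the composition, so the composed object is genuinely a vector encoding rather than merely a column-matrix encoding.
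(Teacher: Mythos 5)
The paper states this lemma without proof (treating it as a direct consequence of the definitions), so there is no in-text argument to compare against; your proposal supplies what the authors evidently regarded as routine bookkeeping, and it does so correctly. Your two-phase construction of $\F\circ\A$ --- run the (exact, hence gate-free) construction of $\A$ to get $U_x$, then feed $U_x$ into the construction phase of $\F$ with a $\delta/2$ split --- is the intended reading of Definition~\ref{d:implementation}, and the complexity count $t(\F(\tol,U_x))=\qc(\F,x)\,t(U_x)$ with $t(U_x)=\qc(\A)$ is exactly what Definition~\ref{d:complexity} gives once you invoke the stated convention. One sentence in your write-up is slightly misphrased: you say ``the convention $t(U_x)=1$ of the same definition identifies $t(U_x)=t(\A(0))=\qc(\A)$,'' but the convention $t(U_x)=1$ in Definition~\ref{d:complexity} is a formal denominator used to \emph{define} $\qc(\A)$ as $t(\A(\tol))$ when $\A$ encodes a vector; it is that identity $\qc(\A)=t(\A(0))$ that you actually need, not a literal claim that $t(U_x)=1$. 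The conclusion $t(U_x)=\qc(\A)$ is nevertheless correct. The rest --- preservation of $\Pi_1=\bra{0}$, the $\ie\ge\Cie$ condition making $U_x$ an admissible input to $\F$, and the efficiency bound with $\A$ exact hence free of quantum construction cost --- all checks out.
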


A similar statement could be made for the function composition. 
Finally, to actually extract information from these encodings, we need some kind of measurement. Using relative-error amplitude estimation as in the proof of Theorem~\ref{thm:normalizing} one can easily estimate the norm of an encoded vector.

\begin{lemma}[Measurement of amplified encodings] \label{lem:measure}
    Let $\tol > 0$, $0 < \delta < 1$, and $N \in \N$. Let $\A$ be an efficient amplified encoding of a vector $v \in \C^N$. Then the norm $|v|$ can be estimated up to relative tolerance $\tol$ and with a success probability of $1 - \delta$ using at most
    \[
    \mathcal{O}(\qc(\A) \tol^{-1}\log \delta^{-1})
    \]
    atomic gates.
\end{lemma}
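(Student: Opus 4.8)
The plan is to reduce the estimation of $|v|$ to an amplitude estimation on a single block encoding produced by $\A$. For any block encoding $(U,\Pi,\bra0,\gamma)$ of $v$ with relative tolerance $\tol'$ we have $\gamma\,\ie(U)=|\gamma\Pi U\ket0|$ by the definition~\eqref{eq:information-efficiency} of information efficiency (with $\Pi_1^\dagger=\ket0$ since $\A$ encodes a vector), while the defining inequality $|\gamma\Pi U\ket0-v|\le\tol'|v|$ of Definition~\ref{d:block} gives $\big|\gamma\,\ie(U)-|v|\big|\le\tol'|v|$. Hence $\gamma\,\ie(U)$ is a relative-$\tol'$ approximation of $|v|$, in which $\gamma$ is known classically from the tuple and $\ie(U)=|\Pi U\ket0|$ is the square root of the probability of measuring $\ket1$ in a fresh ancilla after applying $U$ to $\ket0$ and then the $\CNOT{\Pi}$ gate of Definition~\ref{d:Cnotproj}. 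So $\ie(U)$ — and thus $|v|$ — is accessible through amplitude estimation on a circuit that calls $U$ once.

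First I would fix a small constant, say $c=\tfrac13$, and run $\A$ with tolerance $c\tol$ and failure probability $\tfrac\delta2$ to obtain such a block encoding $(U,\Pi,\bra0,\gamma)$ of $v$. Because $\A$ is efficient, this construction phase uses $\mathcal{O}(\qc(\A)\,\tol^{-1}\log\delta^{-1})$ atomic gates, and the output $U$ has $\qc(\A)$ gates; because $\A$ is \emph{amplified}, $\ie(U)\ge\Cie$. Next I would apply relative-error amplitude estimation — exactly as in the proof of Theorem~\ref{thm:normalizing}, via \cite{AR20} — to the circuit described above, obtaining with failure probability at most $\tfrac\delta2$ an estimate $\sigma$ with $|\sigma-\ie(U)|\le c\tol\,\ie(U)$. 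Since $\ie(U)\ge\Cie$ is bounded below by a constant, this uses $U$ (together with the $\CNOT{\Pi}$ gate, which is part of the encoding) only $\mathcal{O}(\tol^{-1}\log\delta^{-1})$ times, contributing $\mathcal{O}(\qc(\A)\,\tol^{-1}\log\delta^{-1})$ gates. The algorithm returns $\gamma\sigma$.

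It remains to bound the error, which is routine. By the triangle inequality and $|\gamma\Pi U\ket0|\le(1+c\tol)|v|$,
\[
\big|\gamma\sigma-|v|\big|\le\gamma|\sigma-\ie(U)|+\big|\gamma\,\ie(U)-|v|\big|\le c\tol\,|\gamma\Pi U\ket0|+c\tol|v|\le c\tol\,(2+c\tol)\,|v|,
\]
which is at most $\tol|v|$ for $c=\tfrac13$ and $\tol\le1$ (the case $\tol>1$ being subsumed by $\tol=1$). A union bound over the two phases bounds the total failure probability by $\delta$, and adding the two gate counts yields the asserted $\mathcal{O}(\qc(\A)\,\tol^{-1}\log\delta^{-1})$. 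There is no real obstacle here; the one point where a hypothesis rather than bookkeeping is used is the cost of the amplitude-estimation phase, since relative-error amplitude estimation of $\ie(U)$ costs $\mathcal{O}(\tol^{-1}\ie(U)^{-1}\log\delta^{-1})$ queries and would therefore carry an extra $\ie(U)^{-1}$ factor were $\A$ merely constructive; the assumption that $\A$ is an amplified encoding, so that $\ie(U)\ge\Cie$, is precisely what eliminates it.
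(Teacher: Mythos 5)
Your proposal is correct and follows exactly the route the paper indicates: apply relative-error amplitude estimation (as in the proof of Theorem~\ref{thm:normalizing}) to a single output block encoding $\A(c\tol)$ of $v$, using that $\ie(U)\ge\Cie$ to remove the $\ie(U)^{-1}$ factor from the estimation cost. The paper supplies only the one-line hint preceding the lemma, and your account fills in the bookkeeping (decomposition of $|v|$ as $\gamma\,\ie(U)$ up to tolerance, measurement of $\ie(U)$ via the $\CNOT{\Pi}$ ancilla, union bound over construction and estimation) in a way that is consistent and correct.
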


Other kinds of measurement could be derived from this lemma, for example, by composing $\A$ with an encoding for a function representing a quantity of interest, or by combining multiple measurements to compute an inner product using the Hadamard test \cite{AJL06}.

The question remains, whether efficient amplified encodings of relevant matrices can be constructed. Indeed, Theorem~\ref{thm:block-encoding-poly} partially answers this question for the encoding of multivariate polynomials, but the format in which the coefficients are given -- block encodings of the coefficients -- is somewhat unusual. Therefore, we will provide three specific examples.


\begin{example}[Amplified encoding of element-wise polynomials] \label{ex:elementwise-poly}
      Let
  \[
    P(x) \;=\; c_{1}x + c_{2}x^{2} + \cdots + c_{K}x^{K}
  \]
  be a univariate polynomial, and consider the map
  \[
    x \;\mapsto\; \bigl(P(x_{0}),\,P(x_{1}),\,\dots,\,P(x_{N-1})\bigr)
  \]
  which applies \(P\) simultaneously to each coordinate of the input vector \(x \in \mathbb{R}^{N}\).  An efficient amplified encoding of this map can be constructed using Theorem~\ref{thm:block-encoding-poly} by setting
    \[
    A_k(x^{k\otimes}) = k \underbrace{x \odot \dots \odot x}_\text{$k$ times}.
    \]
    This can be implemented using $U_\odot$ from Example~\ref{ex:mul}. The resulting block encodings have information efficiency $\ie = 1$ and the complexity of the amplified encoding can be estimated using Theorem~\ref{thm:block-encoding-poly}, given the minimum value of $P$.
    For this specific case of element-wise polynomials, the QSVT based method \cite{RR23} offers improved performance. (Recall that in our framework, QSVT is typically only used for amplification, but is unrelated to the nonlinearity.) The first step is to encode $x$ as the diagonal matrix
    \[ D = \operatorname{diag}(x_0, \dots, x_{N-1}).\]
 Its construction is equivalent to the partial application $\odot(\Id \otimes x)$ of the block encoding~$U_\odot$. Then, using QSVT, polynomials of $D$ can be implemented and  multiplied with $x$ in a process known as importance sampling. Altogether, the polynomial is computed as
    \[P(x) = \tilde P(D) x \qquad\text{where }\tilde P(x) = P(x) / x.\]
    Through the use of QSVT, the factor $\sum_k |A_k| = \sum_k |c_k|$ is improved to $\max_{x\in[-1,1]} |P(x)/x|$, cf.~\cite[Theorem~3]{RR23}, which may be smaller for $K \ge 3$.
\end{example}

\begin{example}[Amplified encoding of self convolution] \label{ex:convolution}
To illustrate a case that cannot be easily implemented using QSVT-based methods, consider the cyclic convolution 
\[
v \ast v \coloneqq
\begin{bmatrix}
    v_0v_0 + v_1 v_{N-1} + v_2 v_{N-2} + \dots + v_{N-1}v_1 \\
    v_1v_0 + v_2 v_{N-1} + v_3 v_{N-2} + \dots + v_0v_1 \\
    \vdots \\
    v_{N-1}v_0 + v_0 v_{N-1} + v_1 v_{N-2} + \dots + v_{N-2}v_1 \\
\end{bmatrix}
\]
of a vector $v \in \C^N$ with itself where $N = 2^n$. Note that the $j$-th component of $v$ contains all products $v_kv_\ell$, where $k + \ell = j \operatorname{mod} N$.
An amplified encoding of this operation can be implemented using a quantum algorithm~$U_+$ that performs integer addition modulo $2^n$
\[
U_+ \ket{x}\ket{y} = \ket{x + y \operatorname{mod} 2^n}\ket{y}.
\]
It is easily verified that this constitutes a block encoding of the operation $v \otimes v \mapsto v \ast v$ by setting $\gamma = 2^{n/2}$, $\Pi_1 = \Id_{2^{2n}}$, and $\Pi_2 = [1/\sqrt{2} \; 1/\sqrt{2}]^{\otimes n} \otimes \Id_{2^n}$.
\end{example}

\begin{example}[Amplified encoding of sparse polynomials] \label{ex:sparse}
    It is well known that, given oracles providing the non-zero entries and sparsity pattern of a matrix, one can efficiently construct a block encoding of any sparse matrix, see e.g.~\cite[Lemma~48]{GSLW19}. This approach can be used to find efficient amplified encodings of polynomials of the form
    \[
    P(x_0, \dots, x_{N-1})_n = \sum_{k = 0}^K \sum_{j = 0}^{J-1} a_{j,k,n} x^{\alpha^{(j,k,n)}} \qquad\forall n = 0, \dots, N-1
    \]
    where $J, K \in \N$ are sparsity and degree of $P$, $a_{j,k,n} \in \C$ are coefficients, and $\alpha^{(j,k,n)} \in \{0, \dots, N-1\}^{k}$ are multi-indices. In the above formula, we use the notation
    \[
    x^\alpha \coloneqq x_{\alpha_0} x_{\alpha_1} \dots x_{\alpha_{k-1}}.
    \]
    The format of $P$ ensures that the corresponding coefficient matrices $A_k$ in  Theorem~\ref{thm:block-encoding-poly} have at most $J$ non-zero entries per column. Assuming further that the mapping $n \mapsto \alpha^{(j,k,n)}$ is injective for all $j, k$ ensures that all rows have at most $J$ non-zero entries as well. Using \cite[Lemma~48]{GSLW19}, we can therefore find block encodings of the matrices $A_k$ given three oracles (for each $k = 0, \dots, K$) which compute the maps $(j,n) \mapsto \alpha^{(j,k,n)}$, its inverse with respect to $n$, i.e.~$(j, \alpha^{(j,k,n)}) \mapsto n$, and $(j,n) \mapsto a_{j,k,n}$ respectively. This yields an amplified encoding of $P$, where the complexity can be bounded using Theorem~\ref{thm:block-encoding-poly} with $\eta_{\min} = J$. In particular, if the runtime of the three oracles scales polylogarithmically with $N$, then this will also be the case for the complexity of the amplified encoding. Such efficiency is, for example, possible for the solution of partial differential equations such as
    \[
    -\Delta u + a u^2 = f \quad\text{in }[0, 1]^d, \qquad u = 0\quad\text{on }\partial [0, 1]^d
    \]
    using a finite element or finite difference scheme~\cite{DP25}.
\end{example}

This concludes our conceptual framework for numerical computation on quantum computers that is broad and conceptually applicable to a number of domains, including partial differential equations. To demonstrate its suitability for nontrivial operations represented by iterative algorithms, we use nonlinear solvers as a proof of concept and show that they can be implemented straightforwardly using the introduced operations.

\section{Solution of nonlinear systems of equations} \label{sec:schemes}
We design and analyze quantum variants of two popular classical methods for solving the nonlinear equation \eqref{eq:nonlinear-eq}, namely fixed-point iteration and Newton's method. We show how they can be used in the context of the framework of Section~\ref{sec:framework} and characterize the runtime of the complete algorithms in terms of error tolerance. Importantly, while Lemma~\ref{lem:composition} implies that the runtime of either method is exponential in the number of steps, the total runtime is not exponential in terms of the tolerance $\tol$, since only $\log \tol^{-1}$ and $\log\log\tol^{-1}$ many steps are needed, respectively. 

Both methods iteratively generate a series of points~$(x^{(n)})_{n \in \N}$ that converge to a solution~$x^*$ under suitable conditions. Each iterate~$x^{(n+1)}$ depends only on the previous one~$x^{(n)}$, starting with an initial value~$x^{(0)}$. In practice, the complexity of applying some encoding~$\F$ of a function to~$x^{(n)}$ will not only be given by the quantity~$\qc(\F, x^{(n)})$ of Lemma~\ref{lem:composition} but also depend on the error propagated from the inexact previous computation of $x^{(n)}$. For the convergence results to hold, we thus have to assume boundedness of the complexity in the neighborhood of relevant points.
This condition is satisfied for encodings constructed using Theorem~\ref{thm:block-encoding-poly} except for $f(x) = 0$, but we will see that this is not problematic unless $x^{(n)} = 0$ for some $n \in \N$. In this case, the computation could be restarted using $x^{(0)} = 0$.
However, note that we do not necessarily assume $f$ to be a polynomial. Lemma~\ref{lem:invert} could, for example, be used to construct an encoding of $x \mapsto 1/x$.

\subsection{Fixed-point iteration}
The fixed-point iteration of a function $g \colon \C^N \to \C^N$ and initial value $x^{(0)} \in \C^N$ does not aim to solve an equation of the form \eqref{eq:nonlinear-eq} but rather to find the fixed-point $x^* \in \C^N$ of $g$ satisfying $x^* = g(x^*)$. Both problems are equivalent by setting $g(x) \coloneqq f(x) + x$. The fixed-point iteration computes
\[
x^{(n)} = g(g(\dots(g(x^{(0)})))) = g^n(x^{(0)})
\]
for $n \in \N$. The iterates $x^{(n)}$ are known to approach the solution of \eqref{eq:nonlinear-eq} under certain assumptions, for example when $g$ restricts to a contraction $X \to X$ with a closed set $X$ that contains $x^{(0)}$ and a contraction constant $0 \le L < 1$. In this case the fixed-point is also guaranteed to exist and be unique, and the absolute error the $n$-th step can be bounded by
\[
e_n \coloneqq |x^{(n)} - x^*| \le L^n e_0.
\]
For simplicity, we assume that $e_0 \le |x^*|$. Otherwise, $x^{(0)} = 0$ would be a better initial guess. If a relative error bounded by $\tol_g > 0$ is made in each application of $g$, the error bound is slightly weakened to
\[ L^n e_0 + 2 (1 - L)^{-1} |x^*| \tol_g,\]
where $2|x^*|$ is used as an upper limit on the norm of the iterates.
To achieve an error that can be bounded by some relative tolerance $\tol > 0$, one needs to perform
\[
n_\tol = \left\lceil\frac{|\log (\tol - 2(1 - L)^{-1} \tol_g) + \log |x^*| - \log e_0|}{|\log L|}\right\rceil
\]
steps, assuming $\tol_g < \tfrac12 (1 - L)\tol$. The implementation of the fixed-point iteration for amplified encodings using the function composition of Lemma~\ref{lem:composition} is now straightforward.

\begin{theorem}[Quantum fixed-point iteration] \label{thm:fixpoint}
    Let $N \in \N$, and $g \colon \C^N \to \C^N$. Assume that we have access to an efficient amplified encoding $\G$ of $g$, as well as an efficient amplified encoding $\A$ of a vector $x^{(0)} \in \C^N$. We further assume: \begin{itemize}
        \item The function $g$ restricts to a contraction $X \to X$ with $X \subseteq \C^N$ and contraction constant $0 \le L < 1$, such that $x^{(0)} \in X$.
        \item The complexity of $\qc(\G, x)$ of $\G$ is uniformly bounded for $x \in X$ by some constant $\qc_\G \ge 1$.
    \end{itemize}
    Then we can find an efficient amplified encoding of the unique fixed-point $x^*$ with complexity
    \[
    \Cg \tol^{-|\log \qc_\G|/|\log L|}\qc(\A).
    \]
    The constant $\Cg > 0$ depends on $L$, $e_0$, and $|x^{(0)}|$, but is independent of $g$, $\tol$, and $\delta$.
\end{theorem}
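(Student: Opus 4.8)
The plan is to realize the quantum fixed-point iteration as an $n_\tol$-fold self-composition of $\G$ applied to $\A$, with $n_\tol = \mathcal{O}(\log\tol^{-1})$ taken from the classical estimate recorded just before the theorem, and then to read the complexity off Lemma~\ref{lem:composition}. Given the target tolerance $\tol$ and failure probability $\delta$, first fix the per-step parameters: set the per-step relative tolerance $\tol_g \coloneqq \tfrac14(1-L)\tol$ and the per-step failure probability $\delta_g \coloneqq \delta/n_\tol$. Since $\tol_g = \tfrac14(1-L)\tol$ makes $2(1-L)^{-1}\tol_g = \tfrac12\tol$, and using $e_0 \le |x^*|$, the number of steps simplifies to
\[
n_\tol = \left\lceil \frac{\log\tol^{-1} + \log(2e_0/|x^*|)}{|\log L|}\right\rceil \le \frac{\log\tol^{-1}}{|\log L|} + C_0 ,
\]
with $C_0$ depending only on $L$, $e_0$ and $|x^{(0)}|$ (note $|x^*| \le |x^{(0)}| + e_0$).

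Define the output encoding as the composition $\A' \coloneqq \G\circ\cdots\circ\G\circ\A$ consisting of $n_\tol$ copies of $\G$, each invoked with tolerance $\tol_g$ and failure probability $\delta_g$. Because $\A$ and every $\G$ produce amplified encodings, each intermediate block encoding has information efficiency at least $\Cie$ and is therefore an admissible input to the next $\G$ in the chain, cf.\ Definition~\ref{d:implementation}(iii); iterated application of Lemma~\ref{lem:composition} then shows that $\A'$ is an efficient amplified encoding, and a union bound over the $n_\tol$ calls bounds the total failure probability by $n_\tol\delta_g = \delta$. One subtlety here is that Lemma~\ref{lem:composition} is stated for \emph{exact} vector encodings, whereas after the first step the encoding feeding into $\G$ is only inexact; this is handled by combining the composition with the error-propagation bound below, exploiting that $g$ is Lipschitz with constant $L<1$ so that the error already present in $\tilde x^{(n)}$ is merely contracted, not amplified, when $\G$ is applied.

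For correctness I would show by induction on $n$ that the inexact iterates $\tilde x^{(n)}$ decoded from the chain satisfy $|\tilde x^{(n)} - x^*| \le L^n e_0 + 2(1-L)^{-1}|x^*|\tol_g$, precisely the bound sketched in the text: the contraction $g\colon X\to X$ contracts the accumulated error by the factor $L$ each step, while each new application of $\G$ adds an error of relative size $\tol_g$ measured against $|g(\tilde x^{(n)})|$, which stays bounded by a fixed multiple of $|x^*|$; summing the geometric series yields the displayed estimate. By the choice of $n_\tol$ and $\tol_g$ the right-hand side is at most $\tol|x^*|$ for $n=n_\tol$, so $\A'$ encodes $x^*$ to relative tolerance $\tol$. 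For the complexity, Lemma~\ref{lem:composition} gives
\[
\qc(\A',\tol) = \qc(\A)\prod_{k=0}^{n_\tol-1}\qc(\G,\tilde x^{(k)}) \le \qc_\G^{\,n_\tol}\,\qc(\A),
\]
and, since $\qc_\G \ge 1$,
\[
\qc_\G^{\,n_\tol} \le \qc_\G^{\,\log\tol^{-1}/|\log L| + C_0} = \qc_\G^{\,C_0}\,\tol^{-\log\qc_\G/|\log L|} = \Cg\,\tol^{-|\log\qc_\G|/|\log L|},
\]
which is the asserted bound, the prefactor $\Cg$ depending on $L$, $e_0$ and $|x^{(0)}|$ but not on $\tol$ or $\delta$.

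The main obstacle is the inductive error estimate — specifically, ensuring that every perturbed iterate $\tilde x^{(n)}$ stays inside $X$ (or at least inside a neighborhood of $x^*$ on which $g$ is still contractive with constant $L$ and on which the uniform complexity bound $\qc_\G$ holds). This is not automatic from $x^{(0)}\in X$ alone; one argues that, because the error never exceeds $L^n e_0 + 2(1-L)^{-1}|x^*|\tol_g$, which is small once $\tol$ (hence $\tol_g$) is small, the iterates cannot escape the relevant neighborhood, invoking the ``boundedness of the complexity in a neighborhood of the relevant points'' discussed before the theorem. Pinning down how small $\tol_g$ must be chosen, and keeping the constant bookkeeping clean so that the full $g$-dependence of the complexity is confined to the displayed exponent $|\log\qc_\G|/|\log L|$, is the delicate part; everything else — the geometric-series error bound, the $\qc_\G^{\,n_\tol}$ gate count, and the preservation of efficiency and amplification under composition — is routine given Lemma~\ref{lem:composition} and the classical convergence estimate already recorded in the text.
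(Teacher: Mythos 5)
Your proposal is correct and follows essentially the same approach as the paper: fix $\tol_g = \tfrac14(1-L)\tol$ and $\delta_g = \delta/n_\tol$, apply Lemma~\ref{lem:composition} iteratively to the chain $\G\circ\cdots\circ\G\circ\A$, and bound the product $\prod_n \qc(\G,\tilde x^{(n)}) \le \qc_\G^{n_\tol}$ using the classical estimate for $n_\tol$. You do supply somewhat more explicit bookkeeping for the inductive error propagation and flag (correctly) that Lemma~\ref{lem:composition} is stated for exact inputs while later iterates are inexact — a subtlety the paper absorbs into the notation $\tilde x^{(n)}$ and the blanket assumption that $\qc(\G,\cdot)$ is uniformly bounded on $X$, without spelling out why the perturbed iterates remain in $X$.
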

\begin{proof}
Let $\tol > 0$ and $0 < \delta < 1$. Note that by choosing $\tol_g = \tfrac{1 - L}{4}\tol$ and $\delta_g = \delta/n_\tol$ we have $\tol - 2(1 - L)^{-1}\tol_g = \tfrac12 \tol$ and $n_\tol\delta_g = \delta$, so we know that the desired tolerance is reached after the $n_\tol$ steps with the desired probability.
We apply Lemma~\ref{lem:composition} iteratively. Together, we get a total complexity of\[
\qc(\A) \prod_{n=0}^{n_\tol-1} \qc\big(\G, \tilde x^{(n)}\big) \le \qc(\A) \qc_\G^{n_\tol} \le  \qc(\A) \Cg\tol^{-|\log \qc_\G|/|\log L|}\]
where $\tilde x^{(n)}$ is the actual computed iterate including previous errors and
\[\Cg \coloneqq \qc_\G^{|\log |x^*| - \log e_0 - \log 2|/|\log L|}.
\]
Similarly, it is easy to see that the number of gates executed in the construction algorithm is at most $n_\tol$ times the gates in $\G$. Since $n_\tol \in \mathcal{O}(\log \tol^{-1}) \subset \mathcal{O}(\tol^{-|\log\qc_\G|/|\log L|})$ this implies that the described amplified encoding is efficient.
\end{proof}

Including the measurement by Lemma~\ref{lem:measure}, the runtime of the quantum fixed point iteration scales as $\tol^{-s} \log \delta^{-1}$, where $s > 1$ grows logarithmically with $\qc_\mathcal{G}$ and thus logarithmically with the degree of $g$ (according to Theorem~\ref{thm:block-encoding-poly}). To achieve a minimum runtime that scales linearly with the reciprocal of the error tolerance ($1/\tol$), a faster convergence method is required, such as Newton's method discussed in the next section.

\begin{remark}[Quantum advantage in high dimension]
The quantum fixed-point iteration is slower than the linear convergence of the classical fixed-point iteration with respect to the reciprocal error tolerance $\tol^{-1}$. However, this observation does not explicitly take into account the dependence on the potentially high dimension $N$. For $N \to \infty$, this dimensional dependence could provide a quantum advantage: the quantum runtime and memory requirements related to $N$ could scale exponentially smaller than its classical counterpart. As a result, for sufficiently high-dimensional problems, the quantum fixed-point iteration may still retain an overall quantum advantage.
\end{remark}

\subsection{Newton's method}\label{ss:newton}
Newton's method is certainly the most popular scheme to solve nonlinear equations, and computes iterates \begin{equation} \label{eq:newton-original}    
x^{(n+1)} = x^{(n)} - \Diff f\big(x^{(n)}\big)^{-1}f\big(x^{(n)}\big),\;n=0,1,2,\ldots,
\end{equation}
given some initial guess $x^{(0)}$. To realize this method in the present framework for quantum computing, the subsequent equivalent iteration rule 
\begin{equation} \label{eq:newton-alternative}
x^{(n+1)} = \Diff f\big(x^{(n)}\big)^{-1}\big(\Diff f\big(x^{(n)}\big)x^{(n)} - f\big(x^{(n)}\big)\big)
\end{equation}
is advantageous. Since $\Diff f(x^{(n)})x^{(n)} - f(x^{(n)})$ is itself a polynomial with the same coefficients as the polynomial $f$ up to explicit scalar factors, direct evaluations of $f(x^{(n)})$ can be avoided. According to Theorem~\ref{thm:block-encoding-poly}, they become delicate because $f(x^{(n)})$ necessarily approaches zero in the case of convergence.
The term $\Diff f(x^{(n)})x^{(n)} - f(x^{(n)})$ will also stay away from zero for all $n$, if $x^* \neq 0$ and $\Diff f(x^*)$ is invertible. The latter assumption is needed for Newton's method anyway. To quantify convergence properties, we specifically assume that the condition number $\kappa(\Diff f(\bullet))$ is bounded by some $\kappa > 0$. This bound always exists under suitable assumptions, as long as $\Diff f(\bullet)$ is invertible at relevant points.

Newton's method has \textit{local quadratic convergence}. The locality in this case means that the initial value $x^{(0)}$ should be sufficiently close to a solution $x^*$ to \eqref{eq:nonlinear-eq} in the sense that the initial error $e_0 < e_0^{\max}$, where $e_0^{\max} > 0$ is some constant that depends on $f$. We define $\theta$ as the ratio $\theta \coloneqq e_0/e_0^{\max}$. The error of $e_n$ after $n$ steps then satisfies the estimate
\[
e_n \le \theta^{2^n} e_0^{\max}.
\]
To reach a tolerance of $\tol$ the needed steps can be bounded by
\begin{align*}
n_\tol &= \log(\log \tol^{-1} + \log e_0^{\max}) - \log\log \theta^{-1}.
\end{align*}
This assumes that the new iterate is calculated with perfect accuracy.
In addition to the inaccuracy, which is introduced through the use of amplitude amplification, we also have to choose the accuracy of the linear solver, see Lemma~\ref{lem:invert}.
According to~\cite{DES82}, it is sufficient to set the relative tolerance of the solver to $2^{-n}$ in step $n$ to guarantee the local quadratic convergence $n_\tol \in \mathcal{O}(\log\log \tol^{-1})$. This refers to the original formulation~\eqref{eq:newton-original} of Newton's method, where the solver tolerance is relative to $|f(x^{(n)})|$. These values converge to zero, so the \emph{absolute} tolerance is much smaller in the limit $n \to \infty$. For the alternative formulation~\eqref{eq:newton-alternative}, we set the relative solver tolerance to
\[
\frac{e_n}{\kappa|x^{(n+1)}|}2^{-n} \leq\frac{|\Diff f(x^{(n)})^{-1}f(x^{(n)})|}{|x^{(n+1)}|} 2^{-n}.
\]
For our analysis, we are interested in the error after $n = n_\tol$ steps. For simplicity, we will use the same solver tolerance $C_\kappa 2^{-n_\tol} \tol$ for all $n_\tol$. With $C_\kappa = \kappa^{-1} \min_n|x^{(n)}|^{-1}$ independent of $\tol$, this choice is smaller than the individual solver tolerances above required to ensure quadratic convergence. The extra factor of $\tol$ here could potentially be prevented by using linear solvers adapted to this scenario.

\begin{theorem}[Quantum  implementation of Newton's method] \label{thm:newton}
    Let $N \in N$, and $f \colon \C^N \to \C^N$ with some efficient amplified encoding~$\F$ and an exact constructive encoding~$\F'$ of 
    \[ x \mapsto \Diff f(x)x - f(x) \qquad\text{and}\qquad x \mapsto \Diff f(x) \]
    respectively as well as an efficient amplified encoding $\A$ of a vector $x^{(0)} \in \C^N$. Let $x^* \in \C^N$ be the solution to the nonlinear equation~\eqref{eq:nonlinear-eq} and consider the set $X = \{ x \in \C^N \mid |x - x^*| \le |x - x^{(0)}|\}$ of points closer to $x^*$ than the initial guess. We assume further
    \begin{itemize}
        \item The function $f$ is twice continuously differentiable, and the Jacobian $Df(x)$ is invertible for all $x \in X$.
        \item The error of the initial guess satisfies $e_0 < e_0^{\max}$, where $e_0^{\max}$ is the constant related to $f$ introduced above.
        \item The complexities of $\F$ and $\F'$ in $X$ are bounded by constants $\qc_\F, \qc_\F' \ge 1$.
    \end{itemize}
    Then we can find an efficient amplified encoding of $x^*$ with ``subpolynomial'' complexity, i.e.,~for any $\mu > 0$ there exists a constant $C_\mu > 0$ depending on $e_0$, $e_0^{\max}$, the complexity of the encodings, and $\kappa$, such that the complexity is bounded by $C_\mu \tol^{-\mu}$.
\end{theorem}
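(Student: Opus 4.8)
The plan is to combine the iteration-count bound $n_\tol \in \mathcal{O}(\log\log\tol^{-1})$ with the per-step complexity growth coming from Lemma~\ref{lem:composition} and Lemma~\ref{lem:invert}, exactly as in the proof of Theorem~\ref{thm:fixpoint}, but now exploiting the \emph{doubly} logarithmic step count to show that the accumulated multiplicative blow-up is subpolynomial in $\tol^{-1}$. First I would set up the single Newton step as a composition of the encodings provided: apply $\F$ and $\F'$ to the (amplified) current iterate to obtain constructive encodings of $\Diff f(x^{(n)})x^{(n)} - f(x^{(n)})$ and of $\Diff f(x^{(n)})$, then invoke Lemma~\ref{lem:invert} to produce an efficient amplified encoding of $x^{(n+1)} = \Diff f(x^{(n)})^{-1}\bigl(\Diff f(x^{(n)})x^{(n)} - f(x^{(n)})\bigr)$. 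By the complexity bound of Lemma~\ref{lem:invert} and the uniform bounds $\qc_\F, \qc_\F'$ on $X$, one step multiplies the complexity by a factor of the form $C\,\kappa\log(\text{solver tol}^{-1})\,\ie(\F')^{-1}(\qc_\F + \qc_\F')$; since the Jacobian's information efficiency is bounded below on $X$ by Theorem~\ref{thm:block-encoding-poly} (together with the invertibility and $\kappa$-bound assumptions, and staying away from $x=0$ as discussed before the theorem), this factor is bounded by some constant $q \ge 1$ times $\log(\text{solver tol}^{-1})$.

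Next I would track the error budget. As in the discussion preceding the theorem, I fix the solver tolerance at $C_\kappa 2^{-n_\tol}\tol$ uniformly across all $n_\tol$ steps and split the failure probability as $\delta_{\mathrm{step}} = \delta/n_\tol$, so that after $n_\tol$ steps the overall error is bounded by $\tol$ with probability $1-\delta$, invoking the quadratic convergence estimate $e_{n_\tol} \le \theta^{2^{n_\tol}}e_0^{\max}$ and the perturbation bound from the inexact-Newton analysis of~\cite{DES82}. The key point is that $\log(\text{solver tol}^{-1}) \in \mathcal{O}(n_\tol + \log\tol^{-1}) = \mathcal{O}(\log\tol^{-1})$, so the per-step factor is $\mathcal{O}(\log\tol^{-1})$ and the total complexity is bounded by
\[
\qc(\A)\prod_{n=0}^{n_\tol-1}\mathcal{O}\bigl(q\log\tol^{-1}\bigr) \le \qc(\A)\bigl(C\log\tol^{-1}\bigr)^{n_\tol}.
\]
Now I substitute $n_\tol = \log\bigl(\log\tol^{-1} + \log e_0^{\max}\bigr) - \log\log\theta^{-1}$, which gives $\bigl(C\log\tol^{-1}\bigr)^{n_\tol} = \exp\bigl(n_\tol\log(C\log\tol^{-1})\bigr)$ with exponent of order $(\log\log\tol^{-1})^2$. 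Since $(\log\log\tol^{-1})^2 = o(\log\tol^{-1})$, for any $\mu>0$ there is a constant $C_\mu$ with $\exp\bigl(O((\log\log\tol^{-1})^2)\bigr) \le C_\mu\tol^{-\mu}$; this is the desired subpolynomial bound. I would then note, as in Theorem~\ref{thm:fixpoint}, that the construction algorithm executes at most $n_\tol$ times the gates of the constituent encodings, which is dominated by $\tol^{-\mu}$, so the resulting amplified encoding is efficient.

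The main obstacle is not the asymptotics — that is a routine consequence of $\log\log$ beating any power of $\log$ — but the careful bookkeeping of the \emph{error propagation} through the inexact Newton iteration in the alternative formulation~\eqref{eq:newton-alternative}: one must verify that the amplitude-amplification errors from each application of Lemma~\ref{lem:composition}/Lemma~\ref{lem:invert}, combined with the linear-solver tolerance, still fit inside the basin-of-attraction argument so that $x^{(n)}$ stays in $X$ (keeping $\qc_\F, \qc_\F'$, and $\ie(\F')^{-1}$ uniformly bounded) and so that quadratic convergence is not destroyed. This requires the relative solver tolerance to shrink like $2^{-n}$ per~\cite{DES82} while the amplification error stays controlled, and checking that the single uniform choice $C_\kappa 2^{-n_\tol}\tol$ is indeed no larger than each per-step requirement — which is exactly why the extra factor of $\tol$ appears and why it is harmless for the final subpolynomial claim.
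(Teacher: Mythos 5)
Your proposal follows essentially the same approach as the paper's proof: you compose the encodings via Lemma~\ref{lem:composition} and Lemma~\ref{lem:invert}, pick the uniform solver tolerance $C_\kappa 2^{-n_\tol}\tol$, bound the total complexity by the product of per-step factors $\mathcal{O}(\kappa\log\tol^{-1}(\qc_\F + \qc_\F'))$, and then use $n_\tol \in \mathcal{O}(\log\log\tol^{-1})$ to conclude subpolynomial growth. The only (cosmetic) difference is in the final asymptotic step: the paper splits $(C_1\log\tol^{-1})^{C_2\log\log\tol^{-1}}$ into a $\polylog$ factor and a factor controlled via $\mathcal{O}(y^{\log y}) \subset \mathcal{O}(a^y)$, whereas you pass directly to $\exp\bigl(O((\log\log\tol^{-1})^2)\bigr)$ and observe $(\log\log\tol^{-1})^2 = o(\log\tol^{-1})$ — both are correct and equivalent.
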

\begin{proof}
    One Newton step consists of composing the matrix inverse of $\F'$ and $\F$. To have an error of at most $C_\kappa 2^{-n_\tol} \tol$ in each iteration, we run $\F$ with tolerance $\tfrac14 C_\kappa 2^{-n_\tol} \tol$ and solve the linear system using Lemma~\ref{lem:invert} with tolerance $\tfrac12 C_\kappa 2^{-n_\tol} \tol$. The total complexity of computing $x^{(n_\tol)}$ is then bounded by
    \begin{align}
    &\prod_{n=0}^{n_\tol-1} C_\mathrm{sol} \kappa \log(2 C_\kappa^{-1} 2^{n_\tol}\tol^{-1}) \big(\qc\big(\F, \tilde x^{(n)}\big) + \qc\big(\F', \tilde x^{(n)}\big)\big) \notag \\
    &\qquad \le \big(C_\mathrm{sol} \kappa \underbrace{\log(2 C_\kappa^{-1} 2^{n_\tol}\tol^{-1})}_{\log(2C_\kappa^{-1}) + n_\tol + \log \tol^{-1}} (\qc_\F + \qc_F')\big)^{n_\tol}
    \le (C(1 + n_\tol + \log\tol^{-1}))^{n_\tol}, \label{eq:newton:complexity}
    \end{align}
    where $\tilde x^{(n)}$ are the actual iterates including previous errors, $C_\mathrm{sol}$ is the hidden constant in Lemma~\ref{lem:invert}, and
    \[
    C \coloneqq C_\mathrm{sol}\kappa\max\{\log(2C_\kappa^{-1}), 1\}(\qc_\F + \qc_\F').
    \]
    Since $n_\tol$ is of order $\log\log \tol^{-1}$, thus growing slower than $\log \tol^{-1}$, the complexity~\eqref{eq:newton:complexity} asymptotically behaves as
    \begin{equation} \label{eq:newton:complexity2}
    \mathcal{O}((C_1\log \tol^{-1})^{C_2\log\log \tol^{-1}}) = \mathcal{O}\left((\log\tol^{-1})^{C_2\log C_1}\Big((\log \tol^{-1})^{\log\log \tol^{-1}}\Big)^{C_2}\right)
    \end{equation}
    with some constants $C_1, C_2 > 0$. Clearly, the first factor is $\polylog \tol^{-1}$, thus asymptotically smaller than $\tol^{-\mu/2}$ for any $\mu > 0$. We recall the fact that $\mathcal{O}(y^{\log y}) \subset \mathcal{O}(a^y)$ for arbitrarily small $a > 1$. By inserting $y = \log \tol^{-1}$ and $a = 2^{\mu/2C_2}$ we see that the second factor in~\eqref{eq:newton:complexity2} is also bounded by
    \[
    \big(a^{\log \tol^{-1}}\big)^{C_2} = \tol^{-C_2 \log a} = \tol^{-\mu/2}.
    \] 
    Construction efficiency can be shown analogously to Theorem~\ref{thm:fixpoint}.
\end{proof}

The main result of this paper, Theorem~\ref{thm:main}, then follows by combining it with Theorem~\ref{thm:block-encoding-poly} and Lemma~\ref{lem:measure} in a straightforward way. Recall that Theorem~\ref{thm:main} establishes that the theoretical complexity of our quantum implementation of Newton's method for polynomials is nearly linear with respect to $\tol^{-1}$. As mentioned earlier, this complexity result is not unique to our approach, but can be achieved for any tensor state or QSVT-based method -- potentially requiring the introduction of additional amplitude amplification as laid out in Section~\ref{ssec:amplification}.

\begin{remark}[Near optimal complexity]
    To measure any quantity of interest related to the amplitudes of a quantum state with error tolerance \(\tol\), it is well known that the process generating the state must be repeated a number of times proportional to \(\tol^{-1}\)~\cite{Md23}. Consequently, it is reasonable to assume that any quantum nonlinear solver that exploits the exponential speedup in \(N\) from amplitude-based data encoding must also have a complexity that scales at least linearly with \(\tol^{-1}\).
    To formalise this, note that the formulation of nonlinear systems of equations as presented here includes linear equations. Thus, the known lower bound of $\mathcal{O}(\kappa \log \tol^{-1})$ for preparing a solution state for linear systems, see e.g.~\cite{LS24}, must hold for this class of problems as well. To our knowledge, there is no approach that circumvents the $\tol^{-1}$ measurement cost for the linear case.
    This observation underscores that the overall complexity stated in Theorem~\ref{thm:main} is nearly optimal with respect to \(\tol\), at least while requiring polynomial dependency on $\kappa$ and $\log N$. However, whether the overhead of \(\tol^{-\mu}\) can be reduced to \(\log \tol^{-1}\) or even to a constant factor remains an open question.
\end{remark}

\section{Numerical experiments} \label{sec:numerics}
We test the general framework of Section~\ref{sec:framework} and the nonlinear solvers of Section~\ref{sec:schemes}, both on simulators and real hardware. The complete code written in the Qiskit framework \cite{JTK+24} is available in \url{https://github.com/MDeiml/quantum-nonlinear}, which we also refer to for a comprehensive explanation of all the technical details.

We consider the $2$-dimensional function
\[
g(x) = \begin{bmatrix} 1 \\ 1 \end{bmatrix} - \frac18 \begin{bmatrix} (x_1 + x_2)^2 \\ (x_1 - x_2)^2 \end{bmatrix}.
\]
This is a multivariate polynomial that can be cast in the format of Theorem~\ref{thm:block-encoding-poly} with 
\[
A_0 = \begin{bmatrix} 1 \\ 1 \end{bmatrix}, \qquad
A_1 = 0, \qquad
A_2 = -\frac18 \begin{bmatrix} 1 & \phantom{-}1 & \phantom{-}1 & 1 \\ 1 & -1 & -1 & 1 \end{bmatrix}.
\]
However, we can also implement an amplified encoding of $g$ directly following its definition. For the quadratic part of $g$ we then construct two copies of $x$, apply the unitary matrix
\[
\mathrm{H} = \frac{1}{\sqrt{2}}\begin{bmatrix}
    1 & \phantom{-}1 \\ 1 & -1
\end{bmatrix}
\]
and multiply them as in Example~\ref{ex:mul}. The circuit for this process is sketched in Figure~\ref{fig:circuit}. 
\begin{figure}
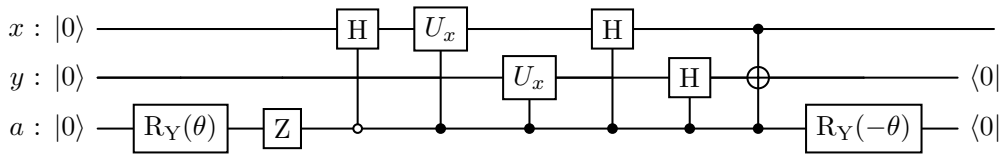

    \centering
    \includestandalone{circuit}
    \caption{Constructive encoding implementing $g(x)$. The angle $\theta$ is related to the addition and depends on the norm of $x$, specifically $\theta = 2 \tan^{-1} |x|/\sqrt[4]{2}$.}
    \label{fig:circuit}
\end{figure}
We can check using simulators that by doing this construction for $U_x$ which prepares $x^{(0)} = [1 \; 1]^T$ we actually get a block encoding of $g(x^{(0)})$.

We now attempt to compute the fixed-point $x^*$ of $g$. Numerically one can check that one of two fixed-points of $g$ is given by $x^* \approx [0.66\; 0.99]^T$. The Jacobian $Dg(x^*)$ at this fixed-point has a spectral norm of roughly $L = 0.58$, so we know that the fixed-point iteration converges if the initial guess is close enough. In fact, we can numerically verify that for the initial guess $x^{(0)} = [1\; 1]$, the error after the $4$ steps is roughly $4 \cdot 10^{-3}$. Following Theorem~\ref{thm:fixpoint} we iteratively construct block encodings $U_{x^{(n)}}$, estimate their information efficiency, and amplify them correspondingly. Using simulated quantum devices, we see that the actual iterates $x^{(n)}$ are encoded almost exactly, see also Figure~\ref{fig:fixed-point-iterations}. This means that $U_{x^{(4)}}$ encodes the fixed-point $x^*$ to a tolerance of $4 \cdot 10^{-3}$.
Naturally, the number of atomic gates in each iteration increases exponentially with the number of steps.
We also run the experiment on the \texttt{ibm\_fez} quantum computer, where we see that the first and second iterate are well approximated. Computation of the third point fails with an unspecified error, possibly because the corresponding circuit is too large. Since previous efforts in the literature did not yield such practical results, this still marks an important step towards solving nonlinear problems on quantum computers.

\begin{figure}
    \centering
    \begin{tikzpicture}
        \begin{axis}[
		width=0.42\textwidth,
		height=0.21\textwidth,
		at={(-0.05\textwidth,0\textwidth)},
		scale only axis,
		unbounded coords=jump,
        domain=0.001:1,
        xmin=0.43,
        xmax=1.03,
        ymin=0.88,
        ymax=1.18,
        legend columns=4,
		legend style={
            /tikz/every even column/.append style={column sep=1ex}
		},
        ]
        \addplot[color=black, mark=o, only marks, thick] coordinates {(0.66, 0.99)};
        \addplot[color=mycolor4, mark=+, thick, mark size=4pt, dashed, mark options={solid}] coordinates {(1.0,1.0) (0.5,1.0) (0.71875, 0.96875) (0.64404297, 0.9921875)};
        \addplot[color=mycolor1, mark=x, thick, mark size=4pt, only marks] coordinates {(1.0,1.0) (0.4999357,  0.99951845) (0.72085069, 0.96736464) (0.64814715, 0.99072392)};
        \addplot[color=mycolor2, mark=square, thick, mark size=4pt, only marks, mark options={solid}] coordinates {(1.0,1.0) (0.46865308, 1.06265909) (0.74254395, 0.97134777)};
        \node (a) at (axis cs:1,1) [anchor=south,inner sep=8pt,node font=\scriptsize] {$x^{(0)}$};
        \draw[->] (axis cs:0.55,1.035) -- (axis cs:0.51,1.01);
        \draw[->] (axis cs:0.55,1.05) -- (axis cs:0.49, 1.06265909);
        \node (a) at (axis cs:0.55,1.05) [anchor=west,inner sep=2pt,node font=\scriptsize] {$x^{(1)}$};
        \node (a) at (axis cs:0.75,0.97) [anchor=north,inner sep=8pt,node font=\scriptsize] {$x^{(2)}$};
        \node (a) at (axis cs:0.65,0.98) [anchor=south,inner sep=8pt,node font=\scriptsize] {$x^{(3)}$};
        \end{axis}
        \begin{axis}[
		width=0.42\textwidth,
		height=0.21\textwidth,
		at={(0.46\textwidth,0\textwidth)},
		scale only axis,
		unbounded coords=jump,
        xmin=1.9,
        xmax=3.1,
        ymin=-0.25,
        ymax=0.35,
        legend columns=4,
		legend style={
            /tikz/every even column/.append style={column sep=1ex},
            at={(0.8,1.1)},
            anchor=south east,
		},
        ]
        \addplot[color=black, mark=o, only marks, thick] coordinates {(2.82842712475, 0)};
        \addlegendentry{fixed-point (left)/root (right)}
        \addplot[color=mycolor4, mark=+, thick, mark size=4pt, dashed, mark options={solid}] coordinates {(2, 0.25) (3.032, -0.129) (2.838, -0.008)};
        \addlegendentry{reference}
        \addplot[color=mycolor1, mark=x, thick, mark size=4pt, only marks] coordinates {(2, 0.25) (3.028,  -0.125) (2.837, -0.008)};
        \addlegendentry{noiseless}
        \addplot[color=mycolor2, mark=square, thick, mark size=4pt, only marks, mark options={solid}] coordinates {(2,0.25) (2.91955146, 0)};
        \addlegendentry{\texttt{ibm\_\{fez,aachen\}}}
        \node (a) at (axis cs:2,0.25) [anchor=north,inner sep=8pt,node font=\scriptsize] {$x^{(0)}$};
        \draw[->] (axis cs:3.032,0.1) -- (axis cs:3.032,-0.1);
        \draw[->] (axis cs:3.0,0.1) -- (axis cs:2.96, 0.04);
        \node (a) at (axis cs:3.032,0.1) [anchor=south,inner sep=2pt,node font=\scriptsize] {$x^{(1)}$};
        \draw[->] (axis cs:2.75,0.1) -- (axis cs:2.81, 0.02);
        \node (a) at (axis cs:2.75,0.1) [anchor=south,inner sep=2pt,node font=\scriptsize] {$x^{(2)}$};
        \end{axis}
    \end{tikzpicture}
    \caption{First three fixed-point iterations of $g$ with starting value $\big[1\; 1\big]^T$ (left) and first two Newton iterations with starting value $\big[2\; \tfrac14\big]^T$ (right). The plot includes classically computed reference values~($+$), values computed using a noiseless quantum simulator ($\times$) and the quantum computers \texttt{ibm\_fez} and \texttt{ibm\_aachen} ($\square$) available through IBM Cloud.}
    \label{fig:fixed-point-iterations}
\end{figure}
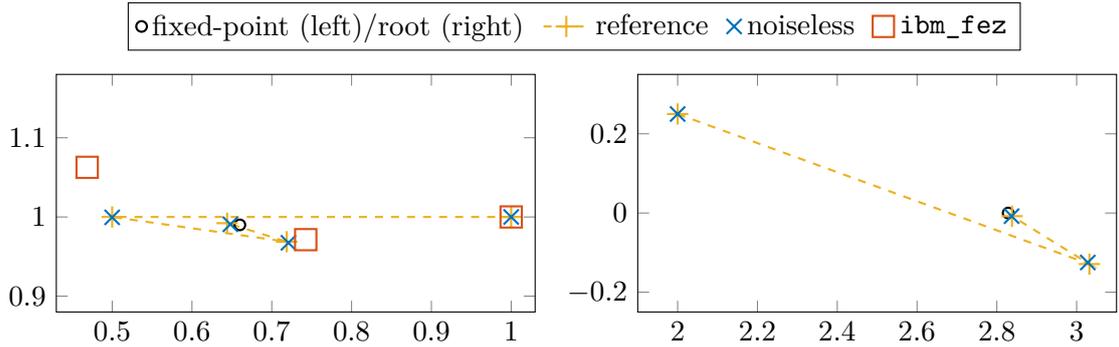
Figure~\ref{fig:fixed-point-iterations} also contains results for the Newton's method applied to $g$ with the initial value~$\big[2\; \tfrac14\big]^T$. Note that this approximates a root of $g$ rather than a fixed-point. The quantum circuits for Newton's method are significantly larger than those for fixed-point iteration, since they require the implementation of a linear solver. For the linear solver, we use a method based on QSVT \cite[Theorem 41]{GSLW19}, with a predefined error tolerance of \(\tol = 0.1\) and condition number bound $\kappa = 6$. The corresponding angles are computed with the tool \texttt{pyqsp} \cite{MRTC21}. The simulated quantum method matches the classical method closely and both approximate the root \(\big[2\sqrt{2}\; 0\big]^T\) with an accuracy of \(10^{-2}\) after only two steps. The first of these Newton steps was also executed on the \texttt{ibm\_aachen} quantum computer using a less precise linear solver, giving a reasonable approximation. The circuit for the second step is larger by a factor of $\sim40$, which prevented a corresponding hardware run.

Although the numerical example above illustrates the feasibility of our method in a low-dimensional setting, extending it to general high-dimensional problems presents new challenges and opportunities. In particular, the number of qubits required scales approximately with \(\log N\) times the number of steps for both Newton's method and fixed-point iterations. Due to the parallel nature of the tensor state construction, the circuits could also be adapted to achieve a favorable trade-off between depth and width. A simple calculation shows that the quantum advantage is particularly apparent with respect to memory. Specifically, the largest current supercomputers have memory to store about \(2^{46}\) floating-point numbers, corresponding to the state space of a quantum computer with about 46 qubits. Although quantum computers of this size already exist, our implementation of Newton's method remains infeasible. For example, even in our small 2-dimensional example, the number of gates required exceeds \(10^5\) per sample, all of which would have to be executed coherently. Such circuits require significant advances in hardware, especially in error correction.

On the other hand, our method is still open to many optimizations, for instance:
\begin{itemize}
    \item Combining amplitude amplification and amplitude estimation using maximum likelihood estimation~\cite{SUR+20} into one adaptive algorithm.
    \item Trading complexity for construction time, possibly through the development of tailored linear solvers.
    \item Finding polynomial decompositions, where the information efficiency is larger than in Theorem~\ref{thm:block-encoding-poly}.
    \item Replacing $x^{(n)}$ by earlier iterates in certain places of Newton's method, similar to~\cite{BC23}.
\end{itemize}
We hope to explore these approaches in the future.

\section{Conclusion}
We have presented a novel framework for linear and nonlinear numerical computation on quantum computers that is both general and practical. Our theoretical analysis shows that this framework can solve nonlinear systems of equations with a logarithmic dependence on problem size, suggesting a potentially exponential speedup over classical methods. Moreover, experiments on currently accessible quantum hardware validate the practicality of our approach, bringing us closer to solving relevant nonlinear problems on quantum computers. This work highlights the potential for achieving a quantum advantage, particularly in tackling high-dimensional nonlinear partial differential equations, and paves the way for relevant applications in science and engineering.

\section*{Acknowledgements}
The work of D. Peterseim is part of a project that has received funding from the European Research Council (ERC) under the European Union’s Horizon 2020 research and innovation programme (Grant agreement No. 865751 -- RandomMultiScales).

We acknowledge the use of IBM Quantum Credits for this work. The
views expressed are those of the authors, and do not reflect the official
policy or position of IBM or the IBM Quantum team.

\bibliographystyle{quantum}
\bibliography{p_quantum_Newton}

\end{document}